\theoremstyle{plain} 
\newtheorem{theorem}{Theorem} 
\theoremstyle{remark}\newtheorem{remark}{Remark}[section]
\newcommand{\R}{\mathbb{R}}
\newcommand{\pr}[1]{{}^\prime\!#1}
\begin{document}
\title{An SIR model with viral load--dependent transmission}
\author[1]{Rossella Della Marca\thanks{\texttt{rossella.dellamarca@sissa.it}}}
\author[2]{Nadia Loy\thanks{Corresponding author, \texttt{nadia.loy@polito.it}}}
\author[2]{Andrea Tosin\thanks{\texttt{andrea.tosin@polito.it}}}
\affil[1]{{\small Mathematics Area, SISSA -- International School for Advanced Studies, Trieste, Italy}}
\affil[2]{{\small Department of Mathematical Sciences ``G. L. Lagrange'', Politecnico di Torino, Torino, Italy}}
\date{}
\maketitle

\begin{abstract}
The viral load is known to be a chief predictor of the risk of transmission of infectious diseases. In this work, we investigate the role of the individuals' viral load in the disease transmission by proposing a new susceptible--infectious--recovered epidemic model for the densities and mean viral loads of each compartment. To this aim, we formally derive the compartmental model from an appropriate microscopic one. Firstly, we consider a multi--agent system in which  individuals  are identified by the epidemiological compartment to which they belong and by their viral load. Microscopic rules describe both the switch of  compartment and the evolution of the viral load. In particular,  in the binary interactions between susceptible  and infectious individuals, the probability for the susceptible individual to get infected depends on the viral load of the infectious individual. Then, we implement the prescribed microscopic dynamics in appropriate  kinetic equations, from which the macroscopic equations for the densities and viral load momentum of the compartments are eventually {derived}. In the macroscopic model, the rate of disease transmission turns out to be a function of the mean viral load of the infectious population. We analytically and numerically investigate  the case that the  transmission rate linearly depends on the viral load, which is compared to the classical case of constant  transmission rate. A qualitative analysis is performed based on stability and bifurcation theory. Finally,  numerical investigations concerning the model reproduction number and the epidemic dynamics are presented.
\end{abstract}

\smallskip

\noindent{\bf Keywords:} Boltzmann--type equations, Markov--type jump processes, epidemic,  basic reproduction number, viral load, qualitative analysis

\smallskip

\noindent{\bf Mathematics Subject Classification:} 35Q20, 35Q70, 35Q84, 37N25%Dynamical systems in biology

\section{Introduction}
The route of transmission of many infectious diseases is given by social contacts among  individuals. The virus shed by an infectious individual  may be transmitted to a healthy one during an encounter, so that the disease also develops in the latter. There is evidence that the quantity of  virus carried by the infectious individual determines the occurrence or not of the transmission: as it is reasonable to expect, higher is the viral load of the infectious individual higher is the probability of transmitting the infection. For example, the quanta emission rate (ER$_\text{q}$) measures the number of quanta (a quantum is the dose of airborne droplet nuclei required to cause infection in 63\% of susceptible persons) into the air per time unit. The ER$_\text{q}$ for respiratory diseases (including SARS--CoV--1, SARS--CoV--2, MERS, measles virus, adenovirus, rhinovirus, coxsackievirus, seasonal influenza virus and \textit{Mycobacterium tuberculosis}) has been estimated as directly [resp. inversely] proportional to the  viral load in sputum [resp. the infectious dose] \cite{MIKSZEWSKI2021}: a more contagious strain would have higher ER$_\text{q}$ values through a higher median viral load and/or a lower infectious dose. The viral load is also the chief predictor of the risk of sexually--transmitted infections, like HIV/AIDS \cite{quinn,wilson2008}.

In the mathematical epidemiology community, the awareness of the importance of the viral load in the  dynamics of infectious diseases has recently led to the development of epidemic models that explicitly incorporate such a microscopic trait \cite{volpertimmuno,DMrLnTa22,loy2021KRM,loy2021MBE}. Specifically, in the paper \cite{loy2021KRM} the authors propose a modelling framework through kinetic equations in which  individuals  are characterized by a discrete label and by their viral  load; then, a prototype epidemic model is introduced in order to  illustrate the impact of individuals' viral  load on
test--and--isolate activities. This work is extended in the paper \cite{loy2021MBE}, where the authors propose a kinetic model for the spread of an infectious disease on a graph, the nodes here representing different spatial locations. By following the wake of papers \cite{loy2021KRM,loy2021MBE}, in the paper \cite{DMrLnTa22} the authors introduce a compartmental susceptible--infectious--isolated--recovered model, in which the individual viral load  evolves according to appropriate microscopic rules and determines the probability of isolation of infectious individuals. 

To the best of our knowledge, the first epidemic model that incorporates the role of viral load in the disease \textit{transmission} term has been  proposed  by Banerjee \textit{et al}. \cite{volpertimmuno}.  In the paper \cite{volpertimmuno} an immuno--epidemiological model is introduced, where the number of susceptible people depends on the number of infectious people through the initial viral load acquired during the interactions. More precisely, according to the model \cite{volpertimmuno}, the growth in the number of infectious individuals increases the initial viral load, and
provides a \textit{switch} from the first stage of the epidemic where only people with weak immune response can be
infected to the second stage where also people with strong immune response can be infected. 

 In the present work, we investigate how the viral load of infectious individuals affects the probability of disease {transmission}, and the consequent epidemic dynamics,    by relying on the modelling framework of \textit{kinetic equations} for \textit{multi--agent systems}. Kinetic theory and Boltzmann--like equations have proved to be a very effective tool to enhance the description of infectious diseases dynamics,  by allowing the incorporation in the model of not only the role of viral load \cite{DMrLnTa22,loy2021KRM,loy2021MBE}, but also  that of: social structure and wealth distribution within the host population \cite{Bernardi,DimarcoPareschi2020,DMPerJOMB,ZanellaBardelliAzzi}, contact heterogeneity \cite{Dimarco2021,MedZan2021}, implementation of lockdown measures \cite{Lockdown} and spatial propagation of the infection \cite{Bertaglia2,BertagliaPareschi,loy2021MBE}. Specifically, we follow the approach of the papers \cite{DMrLnTa22,loy2021KRM,loy2021MBE}, by starting from a detailed description of the microscopic dynamics of the  disease spread,  microscopic dynamics that is shared  by all the individuals (also called the \textit{agents}) that are assumed to be indistinguishable. Then,  we introduce suitable kinetic equations that give a statistical portrait of the agents of the system by following exactly the prescribed microscopic rules. Eventually, from the kinetic equations we derive a macroscopic model for the aggregate description of the system that  naturally inherits the details of the microscopic dynamics.
 
We assume that 
 the individuals are characterized by a double \textit{microscopic state}:
   a \textit{label}, that denotes the epidemiological compartment to which they belong, and  a \textit{physical quantity} that is chosen to be the individual \textit{viral load}. The microscopic dynamics is described in terms of \textit{microscopic interactions} that allow the viral load to evolve and by the means of Markovian processes ruling the switch of compartment. 
We consider a basic susceptible--infectious--recovered (SIR) compartmental structure and assume that  the mechanism of disease {transmission}  (leading the healthy individuals to become ill)   depends on the viral load of the infectious individuals. 

The rest of the manuscript is organized as follows. In Section \ref{sec:2}, we present our multi--agent system and the microscopic dynamics. Then, we revise the modelling framework proposed in the paper \cite{DMrLnTa22} in order to derive the macroscopic compartmental model including the role of viral load in the rate of disease transmission. In Section \ref{sec:4}, we perform a qualitative analysis of the proposed model by determining the equilibria and investigating their stability in terms of the basic reproduction number $\mathcal{R}_0$. In Section \ref{sec:5}, some numerical simulations of the macroscopic model are performed: both the reproduction number and the epidemic temporal dynamics under the assumption of viral load--dependent rate of disease transmission are compared with those under  the \textit{classical} assumption of constant rate of disease transmission. Finally, in Section \ref{sec:6}, we draw some conclusions. 

\section{The mathematical model: from a multi--agent system to compartmental macroscopic equations}\label{sec:2}
Let us consider an infectious disease spreading among individuals as a consequence of social contacts. Individuals are  modelled as agents of a {multi--agent system} and characterized by a microscopic state. In particular, the agents are divided into disjoint {compartments} depending on their  state of health with respect to the disease. Moreover, they are characterized by a physical quantity named \textit{viral load}, that represents the quantity of viral particles present in the organism.

\subsection{The microscopic model} 
At any time $t$ each agent of the system  is characterized by a microscopic state $(x,v)$, where $x \in \mathcal{X}$ is a label that takes into account the  epidemiological compartment to which the agent belongs, and $v\in[0,1]$ is a normalized measure of the individual's viral load, being $v=1$ the  maximum observable value.

The evolution of both the label and the viral  load may be described by the means of microscopic stochastic processes, that can be expressed through Markovian processes \cite{DMrLnTa22}, namely through transition probabilities
\begin{equation*}
	P\left( (j,v) \rightarrow (i,v')\right),
\end{equation*}   
that is the conditional probability for an agent to change microscopic state from $(j,v)$ to $(i,v')$, with $(j,v),\,(i,v') \in \mathcal{X}\times [0,1]$. In general,  the viral load  of an individual may change both simultaneously to and independently of the switch of  compartment. In the second case, we consider transition probabilities for the mere evolution of the viral load of an individual in class $i\in\mathcal{X}$ that we denote by
\begin{equation*}
	P_i(v\rightarrow v')= P\left( (i,v) \rightarrow (i,v')\right),
\end{equation*}  
with $v,\,v'\in[0,1]$.

{Instead, if only the compartment changes, then we denote by  $P(i\rightarrow j)$ the probability for an agent to switch from the compartment $i$ to the compartment $j$.}

\subsubsection{The compartmental structure}
At any time $t$ the agents, labelled with $x\in \mathcal{X}$, are divided in the following disjoint epidemiological  compartments:
\begin{itemize}
	\item \textit{susceptible}, $x=S$:  individuals who are healthy but can contract the disease. The susceptible population
	increases by a net inflow, incorporating  new births and immigration, and decreases due to disease transmission and natural death;
	\item \textit{infectious}, $I$: individuals who  are infected by the disease and can transmit the virus to others. 
	Infectious individuals arise as the result of new infections of susceptible individuals and diminish due to recovery  and natural death; 
	\item  \textit{recovered}, $x=R$:  individuals who have recovered from the disease after the infectious period. They come from the infectious compartments $I$
	and acquire long lasting immunity against the disease. Recovered people diminish only due to natural death.
\end{itemize}
Specifically: susceptible individuals have $v\equiv0$; once infected, an individual's  viral load increases until reaching a peak value (that varies from person to person) and then gradually decreases, see e.g. the  representative plot of SARS--CoV--2 viral load evolution given in the paper \cite{cevik2020virology}, Fig. 2. Hence, for mathematical convenience \cite{DMrLnTa22}, we assume that members of the class $I$
are further divided into:
\begin{itemize}
	\item  infectious 
	with increasing viral load, $x=I_1$;
	\item infectious
	with decreasing viral load, $x=I_2$.
\end{itemize}
Note that new infections enter the class $I_1$, while recovery may occur only during the stage $I_2$. Finally, after the infectious period, recovered individuals may still have a positive viral load  which however definitively  approaches zero, as live virus could no longer be cultured
(see e.g. the studies \cite{cevik2020virology,He2020} on COVID--19 viral shedding).

Also, since our model incorporates \textit{birth} and \textit{death} processes, we  introduce the following two auxiliary compartments: individuals that enter the susceptible class by newborn or immigration, $x=B$, and individuals who die of natural causes, $x=D$. We assume that members of the class $B$ have $v\equiv 0$, while those of the class $D$
retain the viral load value at the time they died. Individuals can switch from the class $B$ to the class $S$ with frequency $\lambda_b$ and probability $P(B\rightarrow S)= b/\rho_B(t)$. The quantity $\rho_B(t)$, that will be defined later, measures the size  of the class $B$  at time $t$.  Moreover, all the living individuals can die, thus moving to the class $x=D$, with frequency $\lambda_{\mu}$ and probability $P(i\rightarrow D)=\mu$, being $i\in\{S,I_1,I_2,R\}$.

\subsubsection{Evolution of the viral  load}\label{Sec:viral-load}
Let us now focus on the mathematical modelling of the evolution of an individual's viral load $v$. 
We distinguish  the two following cases when $v$ changes over time: i) a susceptible individual, having $v=0$, acquires a positive viral load (and gets infected) by interaction with an infectious individual; ii) the viral loads of infectious ($I_1$, $I_2$) 
and recovered ($R$) individuals evolve naturally in virtue of physiological processes. 

Given an agent labelled with $S$, then the necessary condition for acquiring a positive viral load is an encounter with an infectious agent ($I_1$ or $I_2$). {Therefore, we model the disease transmission process as a \textit{binary interaction}}, thus relying on the typical tools of kinetic theory \cite{pareschi2013BOOK}.  Let us denote by $\lambda_\beta>0$ the frequency of these interactions. Increasing [resp. decreasing] $\lambda_\beta$ corresponds to increasing [resp. reducing] encounters among people: the lower $\lambda_\beta$ the more strengthened social distancing.  

By interacting with an infectious individual carrying viral load $w>0$, a susceptible individual does or does not get infected. In the first case his/her  viral load after the interaction (say, $v'$) is positive: $v'>0$; in the second case it remains null: $v'=0$. Specifically, we consider the following {microscopic binary interaction rule}:
\begin{equation*}%\label{eq:micro_bin}
	v'=T_{\nu_{\beta}}v_0,\quad w'=w,
\end{equation*}
where $T_{\nu_{\beta}}$ is a Bernoulli random variable of parameter $\nu_{\beta}={\nu_{\beta} (w)}\in (0,1)$  describing the case of successful transmission of the disease ($T_{\nu_{\beta}}=1$) and the case of contact without transmission ($T_{\nu_{\beta}}=0$). It seems us reasonable to assume that $\nu_{\beta}(w)$, that we name  \textit{transmission function}, is a non--decreasing function of $w$, the viral load of the infectious individual. 

We assume that new infected individuals enter the  class $I_1$ and they all acquire the same initial viral load, $v_0$ (that can be interpreted as an average initial value). 
We remark that this binary interaction process causes simultaneously a change of the microscopic state $v$ and a label switch, because as soon as $v$ becomes positive, i.e. if $T_{\nu_\beta}=1$, the susceptible individual switches to the class $I_1$. 
In terms of transition probabilities for the susceptible individual, this can be expressed as
\begin{equation*}
	P((S,v) \rightarrow (I_1,v'))= \nu_\beta(w)P_S(v \rightarrow v'), \quad  P_S(v \rightarrow v')=\delta(v'-v_0),  
\end{equation*}
given an encounter of the susceptible individual with an infectious one (belonging to either $I_1$ or $I_2$) carrying viral load $w$ and for whom $P((i,w) \rightarrow (i,w))=1$, $i\in\{I_1,I_2\}$. 

Infectious 
and recovered individuals cannot change their viral load in binary interactions, but the evolution reflects  physiological processes. Starting from the initial positive value $v=v_0$, the viral load increases until reaching a given peak value and then it decreases towards zero. 
In this framework, the microscopic state $v$  varies as a consequence of an autonomous process (also called \textit{interaction with a fixed background} in the jargon of multi--agent systems \cite{pareschi2013BOOK}). 
Specifically,
given an agent $(I_1,v)$, namely an infectious individual with increasing viral load, we consider a linear--affine expression for the microscopic rule describing the evolution of $v$ into a new viral load $v'$:
\begin{equation}\label{eq:micro_pre.max}
	v'=v+\nu_1(1-v). 
\end{equation}
The latter is a prototype rule describing the fact that the viral load may increase up to a certain threshold normalized to 1 by a factor proportional to $(1-v)$. In particular, $\nu_1\in (0,1)$ is the factor of increase of the viral load.

Similarly, given an agent $(I_2,v)$ 
or $(R,v)$, namely an infectious individual with decreasing viral load or a recovered individual, we consider the following microscopic rule for the evolution of $v$:
\begin{equation}\label{eq:micro_post.max}
	v'=v-\nu_2 v,
\end{equation}
being the parameter $\nu_2\in(0,1)$ the factor of decay of the viral  load. These microscopic processes happen with frequency $\lambda_\gamma>0$, i.e. ${1}/{\lambda_\gamma}$ is the average increase/decay time of the viral load.

We observe here that the introduction of the sub--classes $I_1,\,I_2$  
is needed in order to implement the microscopic rules  \eqref{eq:micro_pre.max}--\eqref{eq:micro_post.max} in a kinetic equation. 
These two rules are deliberately generic and very simple: the only aim is to distinguish individuals based on whether their viral load is increasing or decreasing and to implement two different factors $\nu_1$ and $\nu_2$ accordingly.

We assume that individuals in $I_1$  move to the class $I_2$ with frequency $\lambda_\gamma$ and constant probability $\nu_1$. In turn, individuals in $I_2$ move to the recovered class with frequency $\lambda_\gamma$ and constant probability $\nu_2$. These choices, that trace the same assumptions done in the paper \cite{DMrLnTa22}, allow to derive 
 $\lambda_\gamma \nu_1$ as the rate of transition from $I_1$ to $I_2$, that is also the increase rate of the viral load. Analogously, the rate of recovery from the disease turns to be $\lambda_\gamma \nu_2$, that is  the decay rate of the viral load. Transitions $I_1\rightarrow I_2$ and $I_2\rightarrow R$ are assumed to take place at the same frequency $\lambda_\gamma$ because they are driven by a common cause, namely the progression of the viral load. Hence, the rate of both transitions coincides with the progression rate of the viral load.
Formally, to describe these microscopic mechanisms in terms of transition probabilities, we set
\begin{align*}
	&P_{I_1}(v\rightarrow v')=\delta\left(v'-(v+\nu_1(1-v))\right),\\
	&P_{i}(v\rightarrow v')=\delta\left(v'-(v-\nu_2 v))\right), \quad i\in\{I_2,R\},\\
		&P((I_1,v)\rightarrow (I_2,v'))=\nu_1P_{I_1}(v\rightarrow v'), \\ 
		& P((I_2,v)\rightarrow (R,v'))=\nu_2P_{I_2}(v\rightarrow v').
\end{align*}

\subsection{The kinetic model and the derivation of the macroscopic model}\label{Sec:micro}

In order to give a statistical description of the multi--agent system, whose total mass is conserved in time, we introduce a distribution function for describing the statistical distribution of the individuals characterized by the pair $(x,v) \in \mathcal{X}\times[0,1]$, as
\begin{equation}\label{eq:f.delta.label_switch_net}
	f(t,x, v)={\sum_{i\in \mathcal{X}}} \delta(x-i)  f_{i}(t,v).
\end{equation}
In \eqref{eq:f.delta.label_switch_net},  $\delta(x-i) $  is the Dirac delta distribution centred at $x=i$, and $f_i=f_i(t,v)\geq 0$ is the distribution function of the microscopic state $v$ of the individuals that are in the $i$th compartment at time $t$. Hence, $f_i(t,v)dv$ is the proportion of individuals in the compartment $i$, whose microscopic state lies between $v$ and $v+dv$ at time $t$.

 We  assume that $f(t,x,v)$ is a probability distribution, namely
\begin{equation*}
	\int_{0}^1\int_\mathcal{X} f(t,x,v)dxdv={\sum_{i\in \mathcal{X}}}\int_{0}^1 f_i(t,v)dv=1, \quad \forall\,t\ge0. 
	%\label{eq:f.prob}
\end{equation*}
In general, the $f_i$'s, $i \in \mathcal{X}$, are  not probability density functions because their $v$--integral varies in time due to the fact that individuals move from one compartment to another. 

We denote by
\begin{equation*}
	\rho_i(t)=\int_{0}^1 f_i(t,v)dv
	%\label{eq:rhoi}
\end{equation*}
the density of individuals in the class $i$, thus $0\leq\rho_i(t)\leq 1$ and
$${\sum_{i\in \mathcal{X}}}\rho_i(t)=1, \quad \forall\,t\ge 0. $$
Then, we define the \textit{viral load momentum} of the $i$th compartment as the first moment of $f_i$ for each class $i \in \mathcal{X}$, i.e. 
\begin{equation*}
	n_i(t)=\int_{0}^1 f_i(t,v)vdv.
\end{equation*}
If $\rho_i(t)>0$, then we can also define the \textit{mean viral load} as the ratio
${n_i(t)}/{\rho_i(t)}.
$
Instead, $\rho_i(t)=0$ implies  necessarily  $f_i(t,v)=0$. In this case,  the mean viral load is not defined because the corresponding compartment is empty. %Moreover,  we highlight  that, if the compartment is \textit{almost} empty, then the mean viral load $n_i/\rho_i$, $i\in \mathcal{X}$, might not be fully consistent with the empirical mean viral load resulting from the particle description, because the law of large numbers does not apply.

Starting from the microscopic dynamics illustrated in the previous section, it is possible to formally derive  kinetic equations implementing exactly the microscopic processes (similarly to what done in the paper \cite{DMrLnTa22}). {We report here the \textit{weak} kinetic equations for completeness. Let $\varphi: [0,1]\to\R$ be an arbitrarily chosen test function of an observable quantity depending on the microscopic physical quantity $v$.}
For $i\in\mathcal{X}\setminus\{B,D\}$, namely the classes of living individuals,  we get:
\begin{itemize}
	\item susceptible individuals  ($i=S$)
	\begin{align}
		\begin{aligned}[b]
			\dfrac{d}{dt}\int_0^1\varphi(v)f_S(t,v)dv &= \int_0^1\varphi(v)\left(\lambda_b \dfrac{b}{\rho_B(t)} f_B(t,v)-\lambda_\mu \mu f_S(t,v)\right)dv \\
			&\phantom{=}-\lambda_{\beta} \int_0^1\int_0^1\int_0^1 \varphi(v)\nu_{\beta} (\pr{w})P_S(v) f_S(\pr{v},t)\left(f_{I_1}(t,\pr{w})+f_{I_2}(t,\pr{w})\right) d\pr{v} d\pr{w} dv, 
		\end{aligned}
		\label{eq:boltz.fS1}
	\end{align}
	where  the last term on the r.h.s. accounts for the binary interactions between susceptible individuals and infectious individuals in either $I_1$  ($f_Sf_{I_1}$) or $I_2$ ($f_Sf_{I_2}$),  leading to the transmission of the disease,
	\item infectious individuals with increasing viral  load ($i=I_1$)
	\begin{align}
		\begin{aligned}[b]
			\dfrac{d}{dt}\int_0^1\varphi(v)f_{I_1}(t,v)dv &=-\lambda_\mu \mu\int_0^1\varphi(v) f_{I_1}(t, v) dv \\
			&\phantom{=}+\lambda_{\beta} \int_0^1\int_0^1\int_0^1 \varphi(v) \nu_{\beta} (\pr{w})P_S(v) f_S(\pr{v},t)\left(f_{I_1}(t,\pr{w})+f_{I_2}(t,\pr{w})\right)  d\pr{v} d\pr{w} dv\\
			&\phantom{=}-\lambda_\gamma\nu_1\int_0^1\int_0^1\varphi(v)P_{I_1}(\pr{v} \rightarrow v)f_{I_{1}}(t,\pr{v}) d\pr{v} dv  \\
			&\phantom{=} +\lambda_\gamma\int_0^1\int_0^1 \varphi (v)(P_{I_1}(\pr{v} \rightarrow v)f_{I_{1}}(t,\pr{v})-P_{I_1}(v \rightarrow \pr{v})f_{I_{1}}(t,v)) d\pr{v}  dv,
		\end{aligned}
		\label{eq:boltz.fI_1}
	\end{align}
	\item infectious individuals with decreasing viral  load ($i=I_2$)
	\begin{align}
		\begin{aligned}[b]
			\dfrac{d}{dt}\int_{0}^1\varphi(v)f_{I_2}(t,v)dv &=-\lambda_\mu \mu \int_{0}^1\varphi(v) f_{I_2}(t, v)dv \\
			&\phantom{=}+\lambda_\gamma\nu_1\int_0^1\int_0^1\varphi(v)P_{I_1}(\pr{v} \rightarrow v)f_{I_{1}}(t,\pr{v}) d\pr{v} dv \\
			&\phantom{=}-\lambda_\gamma\nu_2\int_0^1\int_0^1\varphi(v)P_{I_2}(\pr{v} \rightarrow v)f_{I_{2}}(t,\pr{v}) d\pr{v} dv\\
			&\phantom{=}+ \lambda_\gamma\int_0^1\int_0^1 \varphi (v)(P_{I_2}(\pr{v} \rightarrow v)f_{I_{1}}(t,\pr{v})-P_{I_2}(v \rightarrow \pr{v})f_{I_{2}}(t,v)) d\pr{v}  dv,
		\end{aligned}
		\label{eq:boltz.fI_2}
	\end{align}
	\item recovered individuals ($i=R$)
	\begin{align}
		\begin{aligned}[b]
			\dfrac{d}{dt}\int_{0}^1\varphi(v)f_R(t,v)dv &=-\lambda_\mu \mu \int_{0}^1\varphi(v) f_R(t,v)dv  \\
			&\phantom{=}+\lambda_\gamma\nu_2\int_0^1\int_0^1\varphi(v)P_{I_2}(\pr{v} \rightarrow v)f_{I_{2}}(t,\pr{v}) d\pr{v} dv\\
			&\phantom{=} +\lambda_\gamma\int_0^1\int_0^1 \varphi (v)(P_{R}(\pr{v} \rightarrow v)f_{R}(t,\pr{v})-P_{R}(v \rightarrow \pr{v})f_{R}(t,v)) d\pr{v}  dv .
		\end{aligned}
		\label{eq:boltz.fR}
	\end{align}
\end{itemize}
 {As far as the disease transmission rate $\lambda_\beta\nu_\beta(\cdot)$ is concerned, we consider that it is given by
\begin{equation}\label{betaw}
	\lambda_\beta\nu_\beta(w)=\beta w^p, 
\end{equation}
with $\beta$ positive constant and $p\in \lbrace 0,1 \rbrace$.

 The choice $p=0$ corresponds to a constant transmission function, while $p=1$ reflects the experimental evidence that higher is the viral load of an infectious individual higher is his/her ability of transmitting the disease. Of course, formulations different from the linear one could be taken into account. However, since the novelty of this assumption and in absence of exhaustive field data, the linear formulation can be considered as a reasonable approximation at a first step.}

{
In order to obtain the equations for the macroscopic densities and viral load momentum of each compartment, we set $\varphi(v)=v^n$  in \eqref{eq:boltz.fS1}--\eqref{eq:boltz.fR}, with $n=0,1$, respectively. Since we consider interaction rules that are linear in $v$ %[] 
 and we  assume that $\nu_{\beta}(\cdot)$ is a constant or a linear function,  we obtain an exact closed system of macroscopic equations, without the need of other assumptions. This also implies that at the macroscopic level individuals in the same compartment {may have heterogeneous viral loads that can be different from the mean viral load of the compartment}. %Note that, in principle, we could also add a stochastic fluctuation term in the interaction rules  for the evolution of the viral load \eqref{eq:micro_pre.max}--\eqref{eq:micro_post.max}.}

{The ensuing macroscopic model is given by the following system of non--linear ordinary differential equations:
\begin{equation}
	\begin{aligned}
		\dot \rho_S &=   b -\beta\left(\dfrac{n_{I_1}+n_{I_2}}{\rho_{I_1}+\rho_{I_2}}\right)^p\rho_S\left(\rho_{I_1}+\rho_{I_2}\right)  - \mu\rho_S \\
		\dot \rho_{I_1} &=  \beta\left(\dfrac{n_{I_1}+n_{I_2}}{\rho_{I_1}+\rho_{I_2}}\right)^p\rho_S\left(\rho_{I_1}+\rho_{I_2}\right)
		-\lambda_\gamma\nu_1\rho_{I_1} - \mu \rho_{I_1}\\
		\dot \rho_{I_2}&=\lambda_\gamma\nu_1\rho_{I_1}
		-\lambda_\gamma \nu_2\rho_{I_2}- \mu \rho_{I_2}\\
		\dot \rho_R &=\lambda_\gamma\nu_2\rho_{I_2}
		- \mu \rho_R\\
		\dot n_{I_1}&= \beta\left(\dfrac{n_{I_1}+n_{I_2}}{\rho_{I_1}+\rho_{I_2}}\right)^pv_0\rho_S\left(\rho_{I_1}+\rho_{I_2}\right)+\lambda_\gamma\nu_1(1-\nu_1)\rho_{I_1}-\lambda_\gamma\nu_1(2-\nu_1)n_{I_1}
		- \mu n_{I_1} \\
		\dot n_{I_2}&=\lambda_\gamma\nu_1^2\rho_{I_1}+\lambda_\gamma\nu_1(1-\nu_1)n_{I_1}
		-\lambda_\gamma\nu_2(2-\nu_2) n_{I_2}- \mu n_{I_2}  \\
		\dot n_R&=\lambda_\gamma\nu_2 (1-\nu_2) n_{I_2}
		-\lambda_\gamma \nu_2 n_R- \mu n_R,
	\end{aligned}\label{macro_simplified}
\end{equation}
where we have set (with a slight abuse of notation) $$b=\lambda_bb,\,\,\mu=\lambda_\mu\mu,$$
representing the net inflow of susceptibles and the rate of natural death, respectively.
Also, for convenience of notation, in (\ref{macro_simplified}) we have denoted with the upper dot the time derivative and omitted the explicit dependence on time of the state variables.}

{
From system (\ref{macro_simplified}) we note that the equations ruling the evolution of the densities of the compartments  have an SIR structure, but the transmission term may depend on the mean viral load of the infectious population. In the simplest case that $p=0$, i.e. $\nu_\beta(\cdot)$ is a constant function, we retrieve a classical SIR model with \textit{standard incidence} \cite{hethcote}, which reduces to
\begin{equation}
	\begin{aligned}
		\dot \rho_S &=   b - \beta \rho_S(\rho_{I_1}+\rho_{I_2})  - \mu\rho_S %\label{S_classic}
		\\
		\dot \rho_{I_1} &=  \beta \rho_S(\rho_{I_1}+\rho_{I_2})
		-\lambda_\gamma\nu_1\rho_{I_1} - \mu \rho_{I_1}%\label{I1_classic}
		\\
		\dot \rho_{I_2}&=\lambda_\gamma\nu_1\rho_{I_1}
		-\lambda_\gamma \nu_2\rho_{I_2}- \mu \rho_{I_2}%\label{I2_classic}
		,
	\end{aligned}
\label{macroSIR}
\end{equation}
by noting  that the differential equations for $\rho_R$, $n_{I_1}$, $n_{I_2}$ and $n_R$ are independent of the other ones.
In such a case, the analysis of the model turns to be trivial  being the equations for the densities independent of the viral load momentum.}

{
In the present work, we take a step forward by assuming that $\nu_\beta(\cdot)$ is a linear increasing function, i.e. by choosing $p=1$ in the system \eqref{macro_simplified}.}
{With this choice,  the model to be studied eventually reduces to} 
\begin{subequations}
	\begin{align}
		\dot \rho_S &=   b - \beta \rho_S(n_{I_1}+n_{I_2})  - \mu\rho_S \label{S'}\\
		\dot \rho_{I_1} &=  \beta \rho_S(n_{I_1}+n_{I_2})
		-\lambda_\gamma\nu_1\rho_{I_1} - \mu \rho_{I_1}\label{I1'}\\
		\dot \rho_{I_2}&=\lambda_\gamma\nu_1\rho_{I_1}
		-\lambda_\gamma \nu_2\rho_{I_2}- \mu \rho_{I_2}\label{I2'}\\
		\dot n_{I_1}&= \beta  v_0\rho_S(n_{I_1}+n_{I_2})
		+\lambda_\gamma\nu_1(1-\nu_1)\rho_{I_1}-\lambda_\gamma\nu_1(2-\nu_1)n_{I_1}- \mu n_{I_1} \label{n1'}\\
		\dot n_{I_2}&=\lambda_\gamma\nu_1^2\rho_{I_1}+\lambda_\gamma\nu_1(1-\nu_1)n_{I_1}
		-\lambda_\gamma\nu_2(2-\nu_2) n_{I_2}- \mu n_{I_2}\label{n2'},
	\end{align}\label{macro2}
\end{subequations}
by noting that the differential equations for $\rho_{R}$ and  
$n_{R}$ are independent of the other ones.
 
To models (\ref{macroSIR})--(\ref{macro2}) we associate the following generic initial conditions
\begin{equation}\label{CI}
	\rho_S(0)=\rho_{S,0}>0,\,\,\rho_{i}(0)=\rho_{i,0}\geq 0,\,\,n_{i}(0)=n_{i,0}\geq 0,\quad i\in\{I_1,I_2\}.
\end{equation}
Equilibria and stability properties of  model (\ref{macro2})   are investigated in the following section.
\begin{remark}
	If field data concerning a specific disease showed evidence that the probability of disease  transmission non--linearly depends on the viral load, one could implement a non--linear transmission function $\nu_\beta(\cdot)$. In such a case, it could not be possible to obtain an exact closed system of {macroscopic} equations, but other closure assumptions could be required. For example, in the paper \cite{DMrLnTa22} a monokinetic closure is used, implying that in the derivation of the macroscopic model all the individuals of a given compartment are assumed to have as viral load  the mean value of that compartment.
\end{remark}

\section{Qualitative analysis}\label{sec:4}
Let us start by ensuring that the model (\ref{macro2}) is mathematically and epidemiologically well posed.
It is straightforward to verify that the region
\begin{equation*}%\label{regionD}
	\mathcal{D}=\left\{\left(\rho_S,\rho_{I_1},\rho_{I_2},n_{I_1},n_{I_2}\right)\in [0,1]^5 \,\Big|\, 0<\rho_S+\rho_{I_1}+\rho_{I_2}\leq \dfrac{b}{ \mu},\,n_{I_1}\leq\rho_{I_1},\,n_{I_2}\leq\rho_{I_2}\right\}
\end{equation*}
with initial conditions in (\ref{CI}) is positively invariant for model (\ref{macro2}), namely any solution of system (\ref{macro2}) starting in $\mathcal{D}$ remains in $\mathcal{D}$ for all $t\geq 0$.

\subsection{The disease--free equilibrium and the basic reproduction number}

The model  (\ref{macro2}) has a unique disease--free equilibrium (DFE),  given by
\begin{equation}
	DFE=\left(\dfrac{b}{\mu}, 0,0,0,0\right).\label{DFE}
\end{equation}
It is obtained
by setting the r.h.s. of equations  (\ref{macro2})  to zero and considering the case $\rho_{I_1}=\rho_{I_2}=0$.
\begin{theorem}\label{ProplocalDFE}
	The DFE of model  (\ref{macro2}) is locally asymptotically  stable (LAS) if $\mathcal{R}_0<1$, where
	\begin{equation}\label{R0}
\mathcal{R}_0=\beta \dfrac{ b}{ \mu}\dfrac{\lambda_\gamma\nu_1(\lambda_\gamma\nu_1(1-\nu_2)^2+\lambda_\gamma\nu_2(2-\nu_2)+\mu)+v_0(\lambda_\gamma\nu_1   + \mu)(\lambda_\gamma\nu_1(1-\nu_1)   +\lambda_\gamma\nu_2(2-\nu_2)   + \mu)}{(\lambda_\gamma\nu_1   + \mu)(\lambda_\gamma\nu_1(2-\nu_1)   + \mu)(\lambda_\gamma\nu_2(2-\nu_2)   + \mu)}.
	\end{equation}
	 Otherwise, if $\mathcal{R}_0>1$, then it is unstable.
\end{theorem}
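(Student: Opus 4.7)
The plan is to apply the next--generation matrix method of van den Driessche and Watmough to the ``infected'' subsystem $(\rho_{I_1},\rho_{I_2},n_{I_1},n_{I_2})$, and to identify $\mathcal{R}_0$ with the spectral radius of the resulting matrix $FV^{-1}$. I first compute the Jacobian $J$ of the right--hand side of \eqref{macro2} at the DFE \eqref{DFE}, written in the order $(\rho_S,\rho_{I_1},\rho_{I_2},n_{I_1},n_{I_2})$. Because every partial derivative of $\dot\rho_{I_1},\dot\rho_{I_2},\dot n_{I_1},\dot n_{I_2}$ with respect to $\rho_S$ carries a factor $n_{I_1}+n_{I_2}$ that vanishes at the DFE, the matrix $J$ is block lower triangular with scalar upper--left block $-\mu$. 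Hence $-\mu<0$ is an eigenvalue of $J$ and the stability of the DFE reduces to that of the $4\times 4$ Jacobian $M$ of the infected subsystem.

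Next I split $M=F-V$, where $F$ collects the terms modelling the appearance of new infections---that is, the contribution of $\beta\rho_S(n_{I_1}+n_{I_2})$ to $\dot\rho_{I_1}$ and of $\beta v_0\rho_S(n_{I_1}+n_{I_2})$ to $\dot n_{I_1}$, evaluated at the DFE---while $V$ collects all the remaining transitions and losses. Two structural facts make the analysis tractable. First, $F$ has rank one, since $F=\xi\eta^\top$ with $\xi=(\beta b/\mu)(1,0,v_0,0)^\top$ and $\eta=(0,0,1,1)^\top$, whence $FV^{-1}$ is rank one and $\rho(FV^{-1})=\eta^\top V^{-1}\xi$. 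Second, $V$ is lower triangular with strictly positive diagonal entries $\lambda_\gamma\nu_1+\mu,\ \lambda_\gamma\nu_2+\mu,\ \lambda_\gamma\nu_1(2-\nu_1)+\mu,\ \lambda_\gamma\nu_2(2-\nu_2)+\mu$ and non--positive off--diagonal entries, because all the transitions and source terms between infected compartments move forward in the ordering $\rho_{I_1}\to\rho_{I_2}$, $n_{I_1}\to n_{I_2}$. In particular $V$ is a non--singular M--matrix with $V^{-1}\ge 0$, the assumptions of the van den Driessche--Watmough theorem are met, and one concludes that $s(M)<0\Longleftrightarrow\rho(FV^{-1})<1$ (and similarly for the reverse inequality). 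Defining $\mathcal{R}_0:=\rho(FV^{-1})$ therefore yields both claims of the theorem.

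Finally I compute $\mathcal{R}_0$ explicitly by solving the triangular system $Vy=\xi$ via forward substitution. With the shorthand $a_1=\lambda_\gamma\nu_1+\mu$, $b_1=\lambda_\gamma\nu_1(2-\nu_1)+\mu$ and $b_2=\lambda_\gamma\nu_2(2-\nu_2)+\mu$ (the diagonal entry $\lambda_\gamma\nu_2+\mu$ never enters, since $\eta$ has a zero in position $2$), one obtains $y_1=(\beta b/\mu)/a_1$, $y_3=(\beta b/\mu)(v_0a_1+\lambda_\gamma\nu_1(1-\nu_1))/(a_1b_1)$, and an analogous expression for $y_4$; then $\mathcal{R}_0=y_3+y_4$. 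Reducing the sum to the common denominator $a_1b_1b_2$ and using the elementary identity $\nu_1(2-\nu_1)+(1-\nu_1)^2=1$ collapses the $v_0$--independent part of the numerator into the compact form appearing in \eqref{R0}.

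The conceptual ingredients are all standard; the \emph{only} delicate point I anticipate is the algebraic reorganisation of the numerator of $\eta^\top V^{-1}\xi$, which initially expands into five products before the above identity causes several of them to coalesce. I expect this bookkeeping to be the main obstacle, but no further analytic input is needed: once $F$ and $V$ are in place, the threshold characterisation of $\mathcal{R}_0$ and the local stability of the DFE follow at once.
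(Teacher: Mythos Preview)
Your approach is correct and reaches the same $\mathcal{R}_0$, but it differs from the paper's. The paper computes $J(DFE)$, peels off the two eigenvalues $-\mu$ and $-\lambda_\gamma\nu_2-\mu$ by inspection (columns~1 and~3 of $J(DFE)$ each have a single nonzero entry), and then applies the Routh--Hurwitz criterion to the residual $3\times 3$ block $\bar J$: the condition $a_3=-\det\bar J>0$ is recognised as $\mathcal{R}_0<1$, and the remaining Routh--Hurwitz inequalities are checked to follow from it. You instead keep the full $4\times4$ infected block and use the next--generation matrix decomposition $M=F-V$, exploiting that $F=\xi\eta^\top$ has rank one and $V$ is a triangular M--matrix to obtain $\mathcal{R}_0=\eta^\top V^{-1}\xi$ by forward substitution; the identity $(1-\nu_1)^2+\nu_1(2-\nu_1)=1$ together with $1-\nu_2(2-\nu_2)=(1-\nu_2)^2$ then collapses the numerator into the form~\eqref{R0}. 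Your route is more constructive---$\mathcal{R}_0$ is produced rather than recognised post hoc in $\det\bar J$---and sidesteps the verification of $a_1a_2-a_3>0$; the paper's route is more elementary (no external theorem needed) and trims the algebra to $3\times3$ by spotting the extra trivial eigenvalue, which in your language is the observation that neither $\xi$ nor $\eta$ touches the $\rho_{I_2}$--coordinate. Two minor remarks: the Jacobian at the DFE is block \emph{upper} triangular (the first column below the diagonal vanishes, not the first row to its right), though your spectral conclusion is unaffected; and since two of your ``infected'' variables are moments rather than densities, it is cleaner to invoke the purely matrix--theoretic lemma behind van den Driessche--Watmough (namely $s(F-V)<0\iff\rho(FV^{-1})<1$ for $F\ge0$ and $V$ a nonsingular M--matrix) rather than the full epidemiological theorem with hypotheses (A1)--(A5).
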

\begin{proof}
The Jacobian matrix of system  (\ref{macro2}) 
evaluated at the DFE (\ref{DFE}) 
reads
\begin{equation*}
	J(DFE)=\left(\begin{array}{ccccc}	  - \mu&		0   &		0 & -{\beta} \dfrac{ b}{ \mu}   &		-{\beta} \dfrac{ b}{ \mu}  \\
		0 & -\lambda_\gamma\nu_1   - \mu & 0  &{\beta} \dfrac{ b}{ \mu}&{\beta} \dfrac{ b}{ \mu}\\
		0&\lambda_\gamma\nu_1    &-\lambda_\gamma\nu_2-\mu&0&0\\
		0&	\lambda_\gamma \nu_1(1-\nu_1) & 	0&{\beta}v_0\dfrac{ b}{ \mu}-\lambda_\gamma\nu_1(2-\nu_1) -\mu&{\beta}v_0\dfrac{ b}{ \mu}\\
		0&\lambda_\gamma \nu_1^2&0&\lambda_\gamma \nu_1(1-\nu_1)  &-\lambda_\gamma\nu_2(2-\nu_2) -\mu
	\end{array}\right).
\end{equation*}
One can immediately get the eigenvalues $l_1= - \mu<0$, $l_2=-\lambda_\gamma\nu_2- \mu$, while the other three are determined by the submatrix
$$\bar J=	\left(\begin{array}{ccccc}	
	-\lambda_\gamma\nu_1   - \mu   &{\beta} \dfrac{ b}{ \mu}&{\beta} \dfrac{ b}{ \mu}\\
	\lambda_\gamma \nu_1(1-\nu_1) & 	{\beta}v_0\dfrac{ b}{ \mu}-\lambda_\gamma\nu_1(2-\nu_1) -\mu&{\beta}v_0\dfrac{ b}{ \mu}\\
	\lambda_\gamma \nu_1^2&\lambda_\gamma \nu_1(1-\nu_1)  &-\lambda_\gamma\nu_2(2-\nu_2) -\mu
\end{array}\right). $$
The characteristic polynomial of $\bar J$ reads
$$p(l)=l^3+a_1l^2+a_2l+a_3,$$
where
\begin{align*}
	a_1&=-\text{Tr}(\bar J)=-\bar J_{11}-\bar J_{22}-\bar J_{33}\\
	a_2&=\dfrac{1}{2}\left(\text{Tr}^2(\bar J)-\text{Tr}(\bar J^2)\right)=
	\left(\bar J_{22}-\beta v_0\dfrac{b}{\mu}\right) \bar J_{33}(1-\mathcal{R}_0)+\bar J_{11}\left(\bar J_{22}+\bar J_{33}\right)-\beta\dfrac{b}{\mu}\dfrac{\lambda_\gamma^2\nu_1\nu_2(1-\nu_1)(2-\nu_2)}{\bar J_{11}}%\nonumber
	\\
	a_3&=-\text{Det}(\bar J)=-\bar J_{11}\left(\bar J_{22}-\beta v_0\dfrac{b}{\mu}\right)\bar J_{33}(1-\mathcal{R}_0),
\end{align*}
 with $\mathcal{R}_0$ given in (\ref{R0}).
 
In particular, sgn$(a_3)$=sgn$(1-\mathcal{R}_0)$.
 Also, $\mathcal{R}_0<1$ implies that $\bar J_{22}<0$, yielding $a_1>0$ and $a_1a_2-a_3>0$.

From the Routh--Hurwitz criterion it follows that, if $\mathcal{R}_0<1$,
then the DFE is LAS. Otherwise, if $\mathcal{R}_0>1$,  then it is unstable.\end{proof}

The threshold quantity $\mathcal{R}_0$ is the so--called \textit{basic reproduction number} for model (\ref{macro2}), a frequently used indicator for measuring the potential spread of an infectious disease in a community. Epidemiologically, it represents the average number of secondary cases produced by one primary infection over the course of the infectious period in a fully susceptible population. 

The expression of $\mathcal{R}_0$ for model (\ref{macro2}) turns out to be much more complex than that for the  epidemic model (\ref{macroSIR}) which assumes a constant disease transmission rate (we investigate more in details this point in the subsection \ref{Sec:R0num}). Note that the $\mathcal{R}_0$ in (\ref{R0}) depends also on $v_0$, the initial viral load of infectious individuals, a parameter that is not present in the differential equations for the densities of the compartments, namely (\ref{S'})--(\ref{I1'})--(\ref{I2'}).
\begin{remark}
	It would be interesting to investigate how  the expression of
	the basic reproduction number $\mathcal{R}_0$ varies by modifying some assumptions of model (\ref{macro2}). For instance, one can consider the case that individuals in one between the classes $I_1$ and $I_2$ are infected but not infectious. Specifically, one can assume that
	\begin{itemize}
		\item individuals in $I_1$ are not infectious because, for the specific disease, the period of viral load increase can be approximated to the period of latency of the infection. In such a case, the $I_1$'s play the role of the \textit{exposed} individuals $E$ in an SEIR model. This leads to the disappearance of the term $\beta\rho_Sn_{I_1}$ [resp. $\beta v_0\rho_Sn_{I_1}$] in the equation (\ref{I1'}) [resp. (\ref{n1'})]. The basic reproduction number proves to be 
		\begin{equation*}
			\mathcal{R}_0={\beta} \dfrac{ b}{\mu }\dfrac{\lambda_\gamma\nu_1^2(\lambda_\gamma+\mu)+v_0(\lambda_\gamma\nu_1   + \mu)\lambda_\gamma\nu_1(1-\nu_1)   }{(\lambda_\gamma\nu_1   + \mu)(\lambda_\gamma\nu_1(2-\nu_1)   + \mu)(\lambda_\gamma\nu_2(2-\nu_2)   + \mu)};
		\end{equation*}
		\item individuals in $I_2$ are not infectious because they are isolated from the community and receive treatment to decrease the viral load. This leads to the disappearance of the term $\beta\rho_Sn_{I_2}$ [resp. $\beta v_0\rho_Sn_{I_2}$] in the equation (\ref{I1'}) [resp. (\ref{n1'})]. In such a case, the basic reproduction number proves to be 
		\begin{equation*}
			\mathcal{R}_0={\beta} \dfrac{ b}{\mu }\dfrac{\lambda_\gamma\nu_1(1-\nu_1)+v_0(\lambda_\gamma\nu_1   + \mu)   }{(\lambda_\gamma\nu_1   + \mu)(\lambda_\gamma\nu_1(2-\nu_1)   + \mu)}.	
		\end{equation*}
	\end{itemize}
\end{remark}

\subsection{The endemic equilibrium}
Let us denote by
\begin{equation}
EE=\left(\rho_S^E,\rho_{I_1}^E,\rho_{I_2}^E,n_{I_1}^E,n_{I_2}^E\right)\label{EE}
\end{equation}
the generic endemic equilibrium of model (\ref{macro2}), obtained
by setting the r.h.s. of equations  (\ref{macro2})  to zero and considering the case $\rho_{I_1}+\rho_{I_2}>0$. 
Note that if it were $\rho_{I_1}^E=0$ [resp. $\rho_{I_2}^E=0$], from (\ref{I2'}) it would follow that  $\rho_{I_2}^E=0$  [resp. $\rho_{I_1}^E=0$].  Hence, it must be $\rho_{I_1}^E,\,\rho_{I_2}^E>0$.

More precisely, by rearranging equations (\ref{S'})--(\ref{I1'})--(\ref{I2'})--(\ref{n1'}), one obtains
\begin{equation}
\begin{aligned}
	\rho_S^E&=\dfrac{ b- (\lambda _{\gamma }\nu _1 +\mu ) \rho_{I_1}^E}{\mu }\\
	\rho_{I_1}^E&=n_{I_1}^E\dfrac{\lambda _{\gamma }\nu _1(2-\nu_1) +\mu  }{ \lambda _{\gamma }\nu _1(1-\nu _1)+v_0(\lambda _{\gamma }\nu _1 + \mu  )}\\
	\rho_{I_2}^E&=\rho_{I_1}^E\dfrac{\lambda _\gamma\nu_1}{\lambda_\gamma\nu_2+\mu}\\
	n_{I_2}^E&=\dfrac{b -{\beta } \rho _S^En_{I_1}^E -\mu   \rho _S^E}{{\beta } \rho _S^E}.
\end{aligned}\label{EEcomp}
\end{equation}
Substituting the expressions (\ref{EEcomp}) into (\ref{n2'}), one gets  $n_{I_1}^E$  as a positive root of the equation
\begin{equation*}
 \lambda_\gamma\nu_1(n_{I_1}^E+\nu_1 (\rho_{I_1}^E-n_{I_1}^E))-\lambda_\gamma\nu_2(2-\nu_2) n_{I_2}^E- \mu n_{I_2}^E=0 ,	
\end{equation*}	
that is
\begin{equation}\label{n1EE}
	n_{I_1}^E=b\dfrac{\lambda _{\gamma }\nu _1(1-\nu_1) +v_0(\lambda_\gamma\nu_1+\mu)}{(\lambda_\gamma\nu_1+\mu)(\lambda_\gamma\nu_1(2-\nu_1)+\mu)}\left(1-\dfrac{1}{\mathcal{R}_0}\right).
\end{equation}
Then, one can make explicit also the other components of $EE$:
\begin{equation}
	\begin{aligned}
		\rho_S^E&=\dfrac{ b}{\mu }\dfrac{ 1}{\mathcal{R}_0 }\\
		\rho_{I_1}^E&=\dfrac{b}{ \lambda _{\gamma }\nu _1 + \mu  }\left(1-\dfrac{1}{\mathcal{R}_0}\right)\\
		\rho_{I_2}^E&=b\dfrac{\lambda _{\gamma }\nu _1}{ (\lambda _{\gamma }\nu _1 + \mu )(\lambda _{\gamma }\nu _2 + \mu ) }\left(1-\dfrac{1}{\mathcal{R}_0}\right)\\
		n_{I_2}^E&=\dfrac{\mu   }{\beta }\dfrac{\lambda _{\gamma }\nu _1^2(\lambda_\gamma+\mu) +v_0\lambda_\gamma\nu_1(1-\nu_1)(\lambda_\gamma\nu_1+\mu)}{\lambda_\gamma\nu_1(\lambda_\gamma\nu_1(1-\nu_2)^2+\lambda_\gamma\nu_2(2-\nu_2)+\mu)+v_0(\lambda_\gamma\nu_1   + \mu)(\lambda_\gamma\nu_1(1-\nu_1)   +\lambda_\gamma\nu_2(2-\nu_2)   + \mu)}\left({\mathcal{R}_0}-1\right).
	\end{aligned}\label{EEexplicit}
\end{equation}
For the equilibrium to exist in $\mathcal{D}$ all its components must be positive. Hence, the following result can be stated.
\begin{theorem}
	\label{EEexistence}
	If $\mathcal{R}_0<1$, then the model (\ref{macro2}) has no endemic equilibria. Otherwise, if $\mathcal{R}_0>1$,  then the model (\ref{macro2}) has   an un unique endemic equilibrium (\ref{EE}) whose components are given in (\ref{n1EE})--(\ref{EEexplicit}). 
\end{theorem}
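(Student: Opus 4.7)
The plan is to follow the algebraic reduction already indicated around (\ref{EEcomp})--(\ref{n1EE}), showing that the equilibrium conditions reduce, via a chain of back-substitutions, to a single linear equation in one unknown, so that existence and uniqueness collapse into a sign check on $\mathcal{R}_0 - 1$.

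First, I would observe that at any equilibrium with $\rho_{I_1}^E + \rho_{I_2}^E > 0$ one has in fact $\rho_{I_1}^E, \rho_{I_2}^E > 0$: the equilibrium version of (\ref{I2'}) gives $\rho_{I_2}^E = \lambda_\gamma\nu_1\rho_{I_1}^E/(\lambda_\gamma\nu_2 + \mu)$, so $\rho_{I_1}^E = 0$ forces $\rho_{I_2}^E = 0$ and vice versa, contradicting the endemic hypothesis. In particular, we may divide by $\rho_{I_1}^E + \rho_{I_2}^E$ or by $\rho_S^E$ wherever convenient.

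Second, I would carry out the four explicit solvings that yield (\ref{EEcomp}): solve (\ref{S'})=0 for $\rho_S^E$ in terms of $\rho_{I_1}^E$ (using the fact that, at equilibrium, $b - \mu\rho_S^E = \beta\rho_S^E(n_{I_1}^E + n_{I_2}^E) = (\lambda_\gamma\nu_1 + \mu)\rho_{I_1}^E$ by adding (\ref{S'}) and (\ref{I1'})); solve (\ref{I2'})=0 for $\rho_{I_2}^E$ in terms of $\rho_{I_1}^E$; solve (\ref{n1'})=0 for the ratio $\rho_{I_1}^E/n_{I_1}^E$; and finally read off $n_{I_2}^E$ from the transmission balance in (\ref{S'})=0. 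All four manipulations are linear and introduce no branching.

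Third, I would plug the relations (\ref{EEcomp}) into (\ref{n2'})=0. Every remaining variable is proportional to $n_{I_1}^E$, so the equation is linear in $n_{I_1}^E$. After factoring out the common coefficient $b/[(\lambda_\gamma\nu_1+\mu)(\lambda_\gamma\nu_1(2-\nu_1)+\mu)]$ and grouping the remaining terms, the combination on the right reorganises precisely into the numerator and denominator of $\mathcal{R}_0$ given in (\ref{R0}), yielding the closed form (\ref{n1EE}) $n_{I_1}^E = C(1 - 1/\mathcal{R}_0)$ with $C>0$. Because this equation is linear, this is the \emph{only} candidate, which gives uniqueness for free.

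Finally, I would read off the signs: every prefactor multiplying $(1-1/\mathcal{R}_0)$ in $n_{I_1}^E$, $\rho_{I_1}^E$, $\rho_{I_2}^E$, $n_{I_2}^E$ is manifestly positive (sums and products of positive parameters), and $\rho_S^E = (b/\mu)/\mathcal{R}_0 > 0$ automatically. Hence $\mathcal{R}_0 < 1$ makes $n_{I_1}^E, \rho_{I_1}^E, \rho_{I_2}^E < 0$, so no admissible equilibrium exists; $\mathcal{R}_0 > 1$ produces the unique positive equilibrium (\ref{EEexplicit}). A short check confirms admissibility in $\mathcal{D}$: $\rho_S^E + \rho_{I_1}^E + \rho_{I_2}^E \le b/\mu$ follows from summing the equilibrium versions of (\ref{S'})--(\ref{I1'})--(\ref{I2'}) and using $\rho_R^E = \lambda_\gamma\nu_2\rho_{I_2}^E/\mu \ge 0$, while the bounds $n_{I_j}^E \le \rho_{I_j}^E$ follow from $v_0 \le 1$ and the fact that (\ref{macro2}) preserves these inequalities on $\mathcal{D}$. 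The main bookkeeping obstacle I anticipate is the last step of the third bullet: verifying by hand that the substitution into (\ref{n2'}) reassembles exactly the expression (\ref{R0}) for $\mathcal{R}_0$; everything else is linear algebra and sign inspection.
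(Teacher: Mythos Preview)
Your proposal is correct and mirrors the paper's own argument almost exactly: the paper derives (\ref{EEcomp}) by the same linear rearrangements, substitutes into (\ref{n2'}) to obtain the single linear equation yielding (\ref{n1EE}), and then concludes existence/uniqueness from the sign of $1-1/\mathcal{R}_0$. The only extra content you add is the explicit admissibility check in $\mathcal{D}$, which the paper leaves implicit; note, however, that invoking invariance of $\mathcal{D}$ to deduce $n_{I_j}^E\le\rho_{I_j}^E$ is slightly circular---it would be cleaner to verify these inequalities directly from the closed forms (\ref{n1EE})--(\ref{EEexplicit}).
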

Due to the complexity of the Jacobian matrix of system (\ref{macro2}) evaluated at $EE$, we renounce to study the local stability of the endemic equilibrium. However, we  make use of bifurcation analysis and show that a unique
branch corresponding to the unique endemic equilibrium emerges from the criticality, namely at DFE and $\mathcal{R}_0=1$. The emerging $EE$ is LAS in the neighbouring of $\mathcal{R}_0=1$ for $\mathcal{R}_0>1$.
	
	\subsection{Central manifold analysis}
	
	To derive a sufficient condition for the occurrence of a transcritical  bifurcation at $\mathcal{R}_0=1$, we can use
	a bifurcation theory approach. We adopt the approach developed in
	\cite{dushoff1998,vandendriessche2002}, which is based
	on the general center manifold theory \cite{guckenheimer1983}. In
	short, it establishes that the normal form representing the dynamics
	of the system on the central manifold is, for $u$ sufficiently small, given by
	$$
	\dot{u}=A{u}^{2}+B\beta{u},
	$$
	where
	\begin{equation}
		A=\dfrac{\mathbf{z}}{2}\cdot D_\mathbf{{xx}}\mathbf{F}(DFE,\overline{\beta})\mathbf{w}^{2}\equiv\dfrac{1}{2}{\sum_{k,i,j=1}^{5} z_{k}w_{i}w_{j}\dfrac{\partial^{2}F_{k}(DFE,\overline{\beta})}{\partial x_{i}\partial x_{j}}} \label{eq:a}
	\end{equation}
	and
	\begin{equation}
		B=\mathbf{z}\cdot D_{\mathbf{x}\beta}\mathbf{F}(DFE,\overline{\beta})\mathbf{w}\equiv{\sum^{5}_{k,i=1}}z_{k}w_{i}\dfrac{\partial^{2}F_{k}(DFE,\overline{\beta})}{\partial x_{i}\partial\beta}.\label{eq:b}
	\end{equation}
	Note that in (\ref{eq:a}) and (\ref{eq:b}) the transmission rate $\beta$ has been chosen
	as bifurcation parameter, $\overline{\beta}$ is the critical value of $\beta$, $\mathbf{x}=\left(\rho_S,\rho_{I_1},\rho_{I_2},n_{I_1},n_{I_2}\right)$ is the state variables vector,
	$\mathbf{F}$ is the r.h.s. of system (\ref{macro2}),
	and $\mathbf{z}$ and $\mathbf{w}$
	denote, respectively, the left and right eigenvectors corresponding
	to the null eigenvalue of the Jacobian matrix evaluated at criticality
	(i.e. at DFE and $\beta=\overline{\beta}$).
	
	Observe that $\mathcal{R}_0=1$ is equivalent to
	\[
	{\beta}=\overline{\beta}= \dfrac{ \mu}{ b}\dfrac{(\lambda_\gamma\nu_1   + \mu)(\lambda_\gamma\nu_1(2-\nu_1)   + \mu)(\lambda_\gamma\nu_2(2-\nu_2)   + \mu) }{\lambda_\gamma\nu_1(\lambda_\gamma\nu_1(1-\nu_2)^2+\lambda_\gamma\nu_2(2-\nu_2)+\mu)+v_0(\lambda_\gamma\nu_1   + \mu)(\lambda_\gamma\nu_1(1-\nu_1)   +\lambda_\gamma\nu_2(2-\nu_2)   + \mu)},
	\]
	so that the disease--free equilibrium is  LAS if $\beta<\overline{\beta}$,
	and it is unstable when $\beta>\overline{\beta}$. 
	
	The direction of the bifurcation occurring at $\beta=\overline{\beta}$ can
	be derived from the sign of coefficients (\ref{eq:a}) and (\ref{eq:b}).
	More precisely, if $A>0$ [resp. $A<0$] and $B>0$, then at $\beta=\overline{\beta}$ there
	is a backward [resp. forward] bifurcation.
	
	For our model, we prove the following theorem.
	\begin{theorem}
		System  (\ref{macro2}) exhibits a  forward bifurcation at DFE
		and $\mathcal{R}_0=1$.
		\begin{proof}
			From the proof of Theorem \ref{ProplocalDFE}, one can  verify that, when $\beta=\overline{\beta}$  (or, equivalently,
			when $\mathcal{R}_0=1$), the Jacobian matrix $J(DFE)$ admits a simple zero eigenvalue and the other eigenvalues
			have negative real part. Hence, the
			DFE is a non--hyperbolic equilibrium.
			
			It can be easily checked that a left and a right eigenvector associated
			with the zero eigenvalue so that $\mathbf{z\cdot}\mathbf{w}=1$ are
			\begin{gather*}
				\mathbf{z}=\left(0,\lambda_{\gamma }\nu_1\dfrac{\lambda_{\gamma}\nu _2(1-\nu_1)(2-\nu_2)+\lambda_{\gamma}\nu_1+\mu}{(\lambda_{\gamma}\nu_1+\mu)(\lambda_{\gamma}\nu_1(1-\nu_1)+\lambda_{\gamma}\nu_2(2-\nu_2)+\mu)}z_4,0,z_4,\dfrac{\lambda _{\gamma} \nu _1\left(2-\nu _1\right) +\mu}{\lambda _{\gamma} \nu _1\left(1-\nu _1\right)+\lambda _{\gamma} \nu _2\left(2-\nu _2\right) +\mu}z_4\right)\\
				\mathbf{w}=\left(-\dfrac{ \lambda _{\gamma }\nu _1 + \mu }{ \mu },1,\dfrac{  \lambda _{\gamma }\nu _1}{ \lambda _{\gamma }\nu _2+\mu},\dfrac{v_0(\lambda _{\gamma }\nu _1+\lambda _{\mu }\mu) +\lambda _{\gamma }\nu _1(1-\nu _1) }{ \lambda _{\gamma} \nu _1\left(2-\nu _1\right) +\lambda _{\mu }\mu  },w_5\right)^{T},
			\end{gather*}
			with
		\begin{equation*}
			z_4=
			\dfrac{K}{\lambda_{\gamma }\nu_1\left[\lambda_{\gamma}\nu _2(1-\nu_1)(2-\nu_2)+\lambda_{\gamma}\nu_1+\mu\right]+Kw_4+(\lambda_{\gamma}\nu_1+\mu)(\lambda_{\gamma}\nu_1(2-\nu_1)+\mu)w_5}\end{equation*}
			$$w_5=
			\lambda _{\gamma }\nu _1\dfrac{ \nu _1 \left(\lambda _{\gamma }+\lambda _{\mu }\mu  \right)+ v_0\left(1-\nu _1\right) \left(\lambda _{\gamma }\nu _1 +\lambda _{\mu }\mu  \right)}{\left( \lambda _{\gamma }\nu _1 \left(2-\nu _1\right)+ \lambda _{\mu }\mu \right) \left(\lambda _{\gamma }\nu _2 \left(2-\nu _2\right) + \lambda _{\mu }\mu \right)}, $$
			and
			$$K=(\lambda_{\gamma}\nu_1+\mu)(\lambda_{\gamma}\nu_1(1-\nu_1)+\lambda_{\gamma}\nu_2(2-\nu_2)+\mu).$$
			The coefficients $A$ and $B$ may be now explicitly computed. Considering
			only the non--zero components of the eigenvectors and computing the
			corresponding second derivative of $\mathbf{F}$, it follows that
			\begin{align*}
				A&=z_2w_1w_4\dfrac{\partial^{2}F_{2}(DFE,\overline{\beta})}{\partial \rho_S\partial n_{I_1}}+z_2w_1w_5\dfrac{\partial^{2}F_{2}(DFE,\overline{\beta})}{\partial \rho_S\partial n_{I_2}}+z_4w_1w_4\dfrac{\partial^{2}F_{4}(DFE,\overline{\beta})}{\partial \rho_S\partial n_{I_1}}+z_4w_1w_5\dfrac{\partial^{2}F_{4}(DFE,\overline{\beta})}{\partial \rho_S\partial n_{I_2}}\\&=\overline{\beta}\left(z_2+v_0z_4\right)w_1(w_4+w_5)
			\end{align*}
			and
			\begin{align*}
			B&=z_{2}w_4\dfrac{\partial^{2}F_{2}(DFE,\overline{\beta})}{\partial n_{I_1}\partial\beta}+z_{2}w_5\dfrac{\partial^{2}F_{2}(DFE,\overline{\beta})}{\partial n_{I_2}\partial\beta}+z_{4}w_4\dfrac{\partial^{2}F_{4}(DFE,\overline{\beta})}{\partial n_{I_1}\partial\beta}+z_{4}w_5\dfrac{\partial^{2}F_{4}(DFE,\overline{\beta})}{\partial n_{I_2}\partial\beta}\\&= \dfrac{ b}{\mu}(z_2+v_0z_4)(w_4+w_5),
				\end{align*}
			where $z_2,\,z_4,\,w_4,\,w_5>0$ and $w_1<0$. Then, $A<0<B$. Namely, when $\beta-\overline{\beta}$ changes from negative to positive, the
			DFE changes its stability from locally asymptotically stable to unstable; correspondingly, an endemic and locally asymptotically stable equilibrium emerges. This completes the proof.
	\end{proof} \end{theorem}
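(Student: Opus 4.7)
The plan is to apply the Castillo--Chavez--Song center manifold result as stated through formulas (\ref{eq:a})--(\ref{eq:b}), using $\beta$ as the bifurcation parameter and $\overline{\beta}$ as the critical value characterized by $\mathcal{R}_0=1$. From Theorem \ref{ProplocalDFE} we already know that sgn$(a_3)$ = sgn$(1-\mathcal{R}_0)$, so at $\beta=\overline{\beta}$ the submatrix $\bar J$ has a zero eigenvalue, and the previous Routh--Hurwitz discussion together with the two explicit eigenvalues $l_1=-\mu$, $l_2=-\lambda_\gamma\nu_2-\mu$ ensures that the zero eigenvalue is simple while the remaining four have strictly negative real parts. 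Hence the DFE is non--hyperbolic at criticality, and the theorem is applicable.

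Next I would compute the right eigenvector $\mathbf{w}$ by solving $J(DFE)\mathbf{w}=\mathbf{0}$. The block--triangular structure of $J(DFE)$ lets me eliminate $w_1$ in terms of $w_2$ from the first row, then use the $\rho_{I_2}$--row to express $w_3$ in terms of $w_2$, then the $n_{I_1}$--row to express $w_4$ in terms of $w_2$ and $w_5$, and finally the $n_{I_2}$--row to express $w_5$, fixing the overall scale by picking $w_2=1$. The left eigenvector $\mathbf{z}$ is obtained dually from $\mathbf{z}J(DFE)=\mathbf{0}$: the first and third rows of $J(DFE)^T$ immediately force $z_1=z_3=0$, so only $z_2, z_4, z_5$ are nontrivial, and one solves the resulting $3\times 3$ system and fixes $z_4$ by imposing $\mathbf{z}\cdot\mathbf{w}=1$. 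One thereby recovers precisely the $\mathbf{w}$ and $\mathbf{z}$ displayed in the statement.

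The third step is the sign calculation of $A$ and $B$. The only quadratic couplings in $\mathbf{F}$ come from the transmission term $\beta\rho_S(n_{I_1}+n_{I_2})$ appearing in equations (\ref{S'}), (\ref{I1'}), (\ref{n1'}) (with a $v_0$ factor in the last), so only the second derivatives with respect to the pairs $(\rho_S,n_{I_1})$ and $(\rho_S,n_{I_2})$ are nonzero, and since $z_1=z_3=0$ only the rows $k=2,4$ survive. This collapses the double sum in (\ref{eq:a}) to
\begin{equation*}
A = \overline{\beta}\,(z_2+v_0 z_4)\,w_1\,(w_4+w_5),
\end{equation*}
and similarly the mixed $\beta$--derivative in (\ref{eq:b}) gives
\begin{equation*}
B = \frac{b}{\mu}\,(z_2+v_0 z_4)\,(w_4+w_5).
\end{equation*}
Inspecting the closed--form expressions of $\mathbf{w},\mathbf{z}$, all parameters $\lambda_\gamma,\nu_1,\nu_2,\mu,v_0,b$ being positive with $\nu_1,\nu_2\in(0,1)$, one checks that $z_2, z_4, w_4, w_5>0$ while $w_1<0$ (it is the only negative component, coming from $-(\lambda_\gamma\nu_1+\mu)/\mu$). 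Therefore $A<0<B$, which by the Castillo--Chavez--Song dichotomy means the bifurcation at $\beta=\overline{\beta}$ is forward, and the endemic equilibrium that branches off is locally asymptotically stable for $\mathcal{R}_0$ slightly above $1$.

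The main obstacle is the bookkeeping: the eigenvector components are rational functions of several parameters, and the proof rests on confirming the signs of $z_2$, $z_4$, $w_1$, $w_4$, $w_5$ (and of the factor $z_2+v_0 z_4$) uniformly in the admissible parameter range. Once the eigenvectors are written in a form where every numerator and denominator is a manifestly positive polynomial in $\lambda_\gamma,\nu_1,\nu_2,\mu,v_0$ (using $1-\nu_1,1-\nu_2>0$ and $2-\nu_1,2-\nu_2>0$), the sign analysis becomes mechanical and the conclusion $A<0<B$ is immediate.
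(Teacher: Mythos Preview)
Your proposal is correct and follows essentially the same route as the paper: apply the Castillo--Chavez--Song center manifold theorem with $\beta$ as bifurcation parameter, verify the simple zero eigenvalue at $\beta=\overline{\beta}$, compute the left/right eigenvectors (exploiting $z_1=z_3=0$), reduce $A$ and $B$ to the displayed expressions via the only nonzero second derivatives from the transmission term, and conclude $A<0<B$ from the manifest signs of the eigenvector components. The paper carries out exactly this computation, writing out the eigenvectors explicitly; your outline matches it step for step.
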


\section{Numerical simulations}\label{sec:5}
In this section, we numerically investigate how the  viral load of the infectious individuals may affect the disease transmission among the population.
At this aim, we compare the basic reproduction number and the  numerical solutions of the macroscopic model (\ref{macro_simplified}) in the case of viral load--dependent rate of disease transmission ($p=1$) with those in the {classical} case of constant rate of disease transmission ($p=0$). 

Numerical simulations are performed in \textsc{Matlab}\textsuperscript{\textregistered} \cite{ma}.  We implement  the 4th order  Runge--Kutta method with constant step size for integrating the system (\ref{macro_simplified}). Platform--integrated functions are used for getting the plots. 

\subsection{Parametrization}\label{Sec:par}
\begin{table}[t!]
	\centering\begin{tabular}{|@{}c|c|c@{}|}
		\hline
		Parameter& Description &Baseline value \\
		\hline
			$b$ & Net inflow of susceptibles &$2.11\cdot 10^{-5}$ days$^{-1}$\\	
			$\mu$&  Rate of natural death &$3.18\cdot 10^{-5}$ days$^{-1}$\\	
			$\beta^c$& Constant transmission rate &See text\\	
			$M$& Factor of transmission normalization  &See text\\	
			$\beta^v$& Viral load--dependent transmission factor &$\beta^c/M$\\	
			$v_0$&Initial viral load of infectious individuals &0.01\\	
			$\lambda_\gamma$& Frequency of viral load evolution&0.5 days$^{-1}$\\	
			$\nu_1$&Factor of increase of the viral load&0.4\\	
			$\nu_2$&Factor of decay of the viral load&0.2\\	\hline
	\end{tabular}\caption{List of  model parameters with corresponding description and baseline value.}\label{TabPar}
\end{table}
Since our investigations are purely qualitative, demographic and epidemiological parameters values do not address a specific infectious disease and/or spatial area. They refer to a generic epidemic course following an SIR--like dynamics.

We are considering  a  model with demography and constant net inflow of susceptibles $b$. Since travel restrictions are usually implemented during epidemics, we assume that $b$ accounts only for
new births (which can be assumed to be approximately constant due to the short time span of our analyses). Therefore, the net inflow of susceptibles is given by
	\begin{equation*}
		%\label{netinflow}
		 b=b_r\dfrac{\bar{N}}{N_{tot}},
	\end{equation*}
	where $b_r$ is the birth rate, $\bar N$ denotes the total resident population at the initial time, and 
	$N_{tot}$ is  the total (constant) system size. Note that $N_{tot}$ accounts for individuals belonging to all model compartments $\mathcal{X}$ (including $B$, $D$), whereas $\bar N$ refers only to living individuals. 
	
	We assume an initial population of $\bar N=10^6$ individuals, representing, for example, the  inhabitants of a European metropolis. 
The most recent data by European Statistics  refer to  2020 and provide an average crude birth rate $b_r=9.1/1,000$ years$^{-1}$ \cite{euros1} and an average crude death rate $\mu=11.6/1,000$ years$^{-1}$ \cite{euros2}. 
The total system size $N_{tot}$ is set to $N_{tot}=\bar N/(1-bt_{max})$, in such a way $N_{tot}=\bar N +bt_{max} N_{tot}$ is given by the sum of the initial population, $\bar N$, and the total inflow of individuals during the time interval $[0,t_{max}]$, $bt_{max} N_{tot}$. The time $t_{max}$  is set to $t_{max}=20$ years, that is much larger than the terminal time of our numerical simulations, so ensuring that the compartment $B$ remains not empty.

As far as the disease transmission rate $\lambda_\beta\nu_\beta(\cdot)$ is concerned,  we numerically compare the characteristics of the disease dynamics in the case that $\nu_\beta(\cdot)$ depends on the individual viral load ($p=1$ in (\ref{betaw})) w.r.t the classical case that $\nu_\beta(\cdot)$ is constant ($p=0$ in (\ref{betaw})). Namely, we
consider the following simulation scenarios:
\begin{enumerate}
	\item[\textbf{S$^v$}]  \label{S1} viral load--dependent transmission rate, as studied here: $\lambda_\beta\nu_\beta(w)=\beta^v w$ (i.e., model (\ref{macro2}) with $\beta^v$ in place of $\beta$);
	\item[\textbf{S$^c$}] \label{S2} constant transmission rate, as in {classical} epidemic models: $\lambda_\beta\nu_\beta(w)=\beta^c$ (i.e., model (\ref{macroSIR}) with $\beta^c$ in place of $\beta$).
\end{enumerate}
In order to make the two scenarios properly comparable, we make the following considerations. In the case \textbf{S$^c$}, the quantity $\beta^c$ represents the rate at which infectious individuals transmit the disease in the unit of time. In the case \textbf{S$^v$}, in the microscopic model the same rate is given by $\beta^v$ multiplied by the  microscopic viral load $w$ of the infectious individual ${I_j}$, $j\in\{1,2\}$; whereas, in the macroscopic model (\ref{macro_simplified}) this rate is given by $\beta^v$ multiplied by the mean viral load of the total infectious population: $(n_{I_1}+n_{I_2})/(\rho_{I_1}+\rho_{I_2})$. Thus, we assume that the value of $\beta^v$ in  scenario \textbf{S$^v$}  is given by the value $\beta^c$ adopted in  scenario \textbf{S$^c$}  rescaled by a normalization factor $M\in(0,1)$:
\begin{equation}\label{betavc}
	\beta^v=\dfrac{\beta^c}{M},
\end{equation}
where $M$ represents an \textit{average} quantity for  $(n_{I_1}+n_{I_2})/(\rho_{I_1}+\rho_{I_2})$. It follows that
$$\beta^v>\beta^c.$$
 For the other epidemiological parameters we take the following baseline values from the paper \cite{DMrLnTa22}:
 \begin{equation*}
  \lambda_\gamma=1/2 \text{ days}^{-1},\,\, \nu_1=1/(5\lambda_\gamma),\,\, \nu_2=\nu_1/2,\,\,v_0=0.01.
\end{equation*}
In particular, the product $\lambda_\gamma\nu_1$ can be interpreted as the inverse of the average time from exposure to viral load peak, whilst $\lambda_\gamma\nu_2$ as the inverse of the average time from viral load peak to recovery.  

All the parameters of the model as well as their baseline values are reported in Table \ref{TabPar}.

\subsection{Impact of viral load on the reproduction number}\label{Sec:R0num}
In this subsection, we investigate the impact of different modelling assumptions about the disease transmission rate on the expression and value of the reproduction number of model (\ref{macro_simplified}).

At this aim, let us denote by 
\begin{equation}\label{R0v}
	\mathcal{R}_0^v=\beta^v \dfrac{ b}{ \mu}\dfrac{\lambda_\gamma\nu_1(\lambda_\gamma\nu_1(1-\nu_2)^2+\lambda_\gamma\nu_2(2-\nu_2)+\mu)+v_0(\lambda_\gamma\nu_1   + \mu)(\lambda_\gamma\nu_1(1-\nu_1)   +\lambda_\gamma\nu_2(2-\nu_2)   + \mu)}{(\lambda_\gamma\nu_1   + \mu)(\lambda_\gamma\nu_1(2-\nu_1)   + \mu)(\lambda_\gamma\nu_2(2-\nu_2)   + \mu)}
\end{equation}
the basic reproduction number of model (\ref{macro_simplified}) in the case of viral load--dependent transmission rate, \textbf{S$^v$}, that is (\ref{R0}) with $\beta^v$ in place of $\beta$.  It is straightforward to verify that in the case of constant disease transmission rate,  \textbf{S$^c$}, the basic reproduction number of model (\ref{macro_simplified})  reads
\begin{equation}\label{R0c}
	\mathcal{R}_0^c=\beta^c \dfrac{ b}{ \mu}\dfrac{\lambda_\gamma\nu_1+\lambda_\gamma\nu_2   + \mu }{(\lambda_\gamma\nu_1   + \mu)(\lambda_\gamma\nu_2   + \mu)},
\end{equation}
(see also the paper \cite{DMrLnTa22}).

\begin{remark}\label{rmklimit}
It is difficult to determine \textit{a priori} the relationship between  $\mathcal{R}_0^v$ and $\mathcal{R}_0^c$ for a given set of parameters. 
Nonetheless,  some considerations can be made in the limit cases:
\begin{itemize}
	\item[i)] $\lambda_\gamma\gg 1$, namely the viral load of an infected individual evolves very slowly. \\
	Then, by considering the numerator and the denominator of $\mathcal{R}_0^v$ and $\mathcal{R}_0^c$ as polynomials in $\lambda_\gamma$ and disregarding the lower order terms, one can approximate
	 \begin{equation*}
	 	\mathcal{R}_0^v\approx\beta^v \dfrac{ b}{ \mu}\dfrac{\nu_1(1-\nu_2)^2+\nu_2(2-\nu_2)+v_0(\nu_1(1-\nu_1)   +\nu_2(2-\nu_2))}{\lambda_\gamma\nu_1\nu_2(2-\nu_1)(2-\nu_2)},\,\,\mathcal{R}_0^c\approx\beta^c \dfrac{ b}{ \mu}\dfrac{\nu_1+\nu_2}{\lambda_\gamma   \nu_1\nu_2}.
	 \end{equation*}
 Interestingly, the ratio $\mathcal{R}_0^v/\mathcal{R}_0^c$ turns to be independent of $\lambda_\gamma$.
		\item[ii)] $\nu_1\to 0$, namely all the infectious individuals have constant viral load $v_0$ and do not recover from the disease. \\
	Then, $\mathcal{R}_0^v$ and $\mathcal{R}_0^c$ coincide. Indeed,
	$$\mathcal{R}_0^v=\beta^v \dfrac{ bv_0}{ \mu^2}=\beta^c \dfrac{ b}{ \mu^2}=\mathcal{R}_0^c,$$
	being $M=v_0$ in (\ref{betavc}).
	\item[iii)]  $\nu_1, \nu_2\to 1$, namely in the two subsequent evolution steps after the infection, the viral load of the infected individuals reaches the maximum value 1 and then vanishes, respectively.\\
	Then, the reproduction numbers $\mathcal{R}_0^v$ and $\mathcal{R}_0^c$ read
	\begin{equation*}
		\mathcal{R}_0^v=\beta^v \dfrac{ b}{ \mu}\dfrac{\lambda_\gamma+v_0(\lambda_\gamma   + \mu)}{(\lambda_\gamma   + \mu)^2},\,\,\mathcal{R}_0^c=\beta^c \dfrac{ b}{ \mu}\dfrac{2\lambda_\gamma+ \mu }{(\lambda_\gamma   + \mu)^2},
	\end{equation*}
implying that
\begin{equation*}
	\text{sgn}\left(\mathcal{R}^v_0-\mathcal{R}^c_0\right)=\text{sgn}\left((1-M)\lambda_{\gamma}-(M-v_0)(\lambda_\gamma   + \mu)\right).
\end{equation*}
In such a case, if we further assume that $\lambda_\gamma\gg 1$ (see point (i)), then the ratio $\mathcal{R}_0^v/\mathcal{R}_0^c$ reduces to
\begin{equation*}
	\dfrac{\mathcal{R}_0^v}{\mathcal{R}_0^c}=\dfrac{1+v_0}{2M}.
\end{equation*}
\end{itemize}
\end{remark}
\begin{figure}[t]\centering
	\includegraphics[scale=0.9]{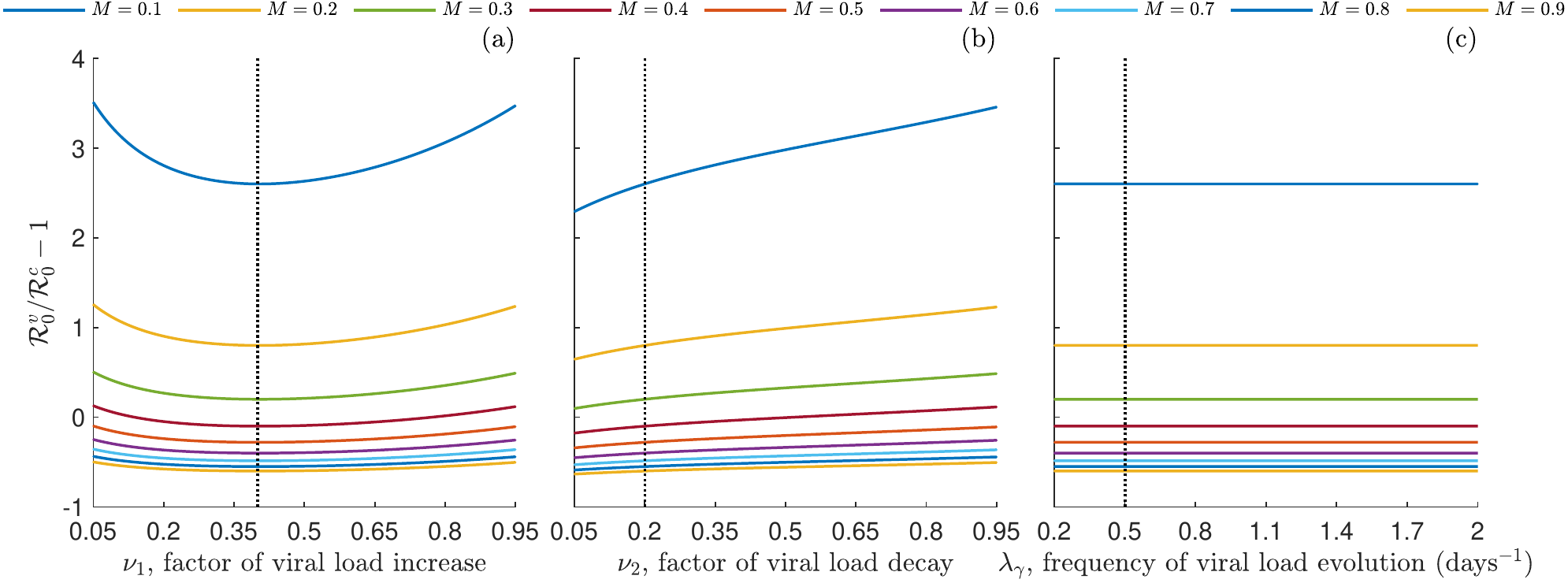}
	\caption{Relative difference of the reproduction number of the model (\ref{macro_simplified}) in the scenario \textbf{S$^v$}, $\mathcal{R}_0^v$, w.r.t. the reproduction number in the scenario \textbf{S$^c$}, $\mathcal{R}_0^c$, for nine values of the factor of transmission normalization, $M\in\{0.1,0.2,0.3,0.4,0.5,0.6,0.7,0.8,0.9\}$ (as indicated in the legend). Panel (a): $\mathcal{R}_0^v/\mathcal{R}_0^c-1$ as a function of the factor of viral load increase, $\nu_1$. Panel (b): $\mathcal{R}_0^v/\mathcal{R}_0^c-1$ as a function of the factor of viral load decay, $\nu_2$. Panel (c): $\mathcal{R}_0^v/\mathcal{R}_0^c-1$ as a function of the frequency of viral load evolution, $\lambda_\gamma$. Black dotted  lines indicate the corresponding baseline value. Other parameters values are given in Table \ref{TabPar}. }\label{fig1}
\end{figure}

An overall view of the relationship between $\mathcal{R}_0^v$ and $\mathcal{R}_0^c$ is provided  
by numerically exploring their mutual position when the relevant model parameters vary in appropriate ranges. 
 In Fig. \ref{fig1}, we display the relative difference of $\mathcal{R}_0^v$ w.r.t. $\mathcal{R}_0^c$, that is the ratio
 \begin{equation*}
 	\label{relativediff}\dfrac{\mathcal{R}_0^v-\mathcal{R}_0^c}{\mathcal{R}_0^c}=\dfrac{\mathcal{R}_0^v}{\mathcal{R}_0^c}-1,
 \end{equation*}
as a function of the factor of viral load increase, $\nu_1$ (Fig. \ref{fig1}a), the factor of viral load decay, $\nu_2$ (Fig. \ref{fig1}b), and the frequency of viral load evolution, $\lambda_\gamma$ (Fig. \ref{fig1}c).  The parameters $\nu_1$, $\nu_2$ and $\lambda_\gamma$ \textit{continuously} vary in the possible ranges of values
\begin{equation}\label{ranges}
	\nu_1,\,\nu_2\in[0.05,0.95],\,\,\lambda_\gamma\in[0.2,2]\,\, \text{days}^{-1}.
\end{equation}
We disregard the extreme cases that $\nu_1,\,\nu_2\approx 0$ and $\nu_1,\,\nu_2\approx 1$, which we consider to be rather unrealistic.  Also, we consider nine values of the factor of transmission normalization $M$ that span the range $[0.1,0.9]$, as indicated in the legend of Fig. \ref{fig1}. The baseline values of the other parameters are those  given in Table \ref{TabPar}. Note that the ratio ${\mathcal{R}_0^v}/{\mathcal{R}_0^c}$ is independent of  $\beta^c$, which does not need to be assigned for the moment.

 From Fig. \ref{fig1}, we observe that the ratio $\mathcal{R}_0^v/\mathcal{R}_0^c$ is a non--monotone convex function of $\nu_1$ (Fig. \ref{fig1}a), an increasing function of $\nu_2$ (Fig. \ref{fig1}b) and an almost constant function of $\lambda_\gamma$ (Fig. \ref{fig1}c), independently of $M$.  In particular, as a function of the factor of viral load increase, $\nu_1$, the ratio $\mathcal{R}_0^v/\mathcal{R}_0^c$ is minimum for intermediate values of $\nu_1$ and assumes almost the same value at $\nu_1=0.05$ and $\nu_1=0.95$.  As regards the irrelevance of  the frequency of viral load evolution $\lambda_\gamma$ on the ratio $\mathcal{R}_0^v/\mathcal{R}_0^c$, it can be explained by the fact that in the current parameter setting the rate of natural death, $\mu$, is much lower than the other parameters contributing to the reproduction numbers. From the expressions (\ref{R0v})--(\ref{R0c}), it follows that  $\mathcal{R}_0^v/\mathcal{R}_0^c$ is almost independent of $\lambda_\gamma$ being the terms multiplied by $\mu$ negligible (see also the point (i) of Remark \ref{rmklimit}). 
 
 From Fig. \ref{fig1} we also observe that 
 the relative difference $\mathcal{R}_0^v/\mathcal{R}_0^c-1$ decreases by increasing the factor of transmission normalization $M$ (see (\ref{betavc})), eventually passing from positive to negative values (namely, from $\mathcal{R}_0^v>\mathcal{R}_0^c$ to $\mathcal{R}_0^v<\mathcal{R}_0^c$). Globally, $\mathcal{R}_0^v$ spans from being about 60\% lower than $\mathcal{R}_0^c$ ($M=0.9$) to  350\% higher than $\mathcal{R}_0^c$ ($M=0.1$).  Also, the sensitivity of $\mathcal{R}_0^v/\mathcal{R}_0^c$ to $M$ greatly diminishes when $M$ overcomes the value 0.5. From Fig. \ref{fig1}a [resp. Fig. \ref{fig1}b] we can note that, for a given value of $M$, the relative difference $\mathcal{R}_0^v/\mathcal{R}_0^c-1$ can pass through zero by varying $\nu_1$ [resp. $\nu_2$], meaning that the mutual position between $\mathcal{R}_0^v$ and $\mathcal{R}_0^c$ changes.  In particular, this happens for $M=0.4$ and it is the reason why we choose it as baseline value for the next numerical investigations in this subsection.
 \begin{figure}[t]\centering
 	\includegraphics[scale=0.9]{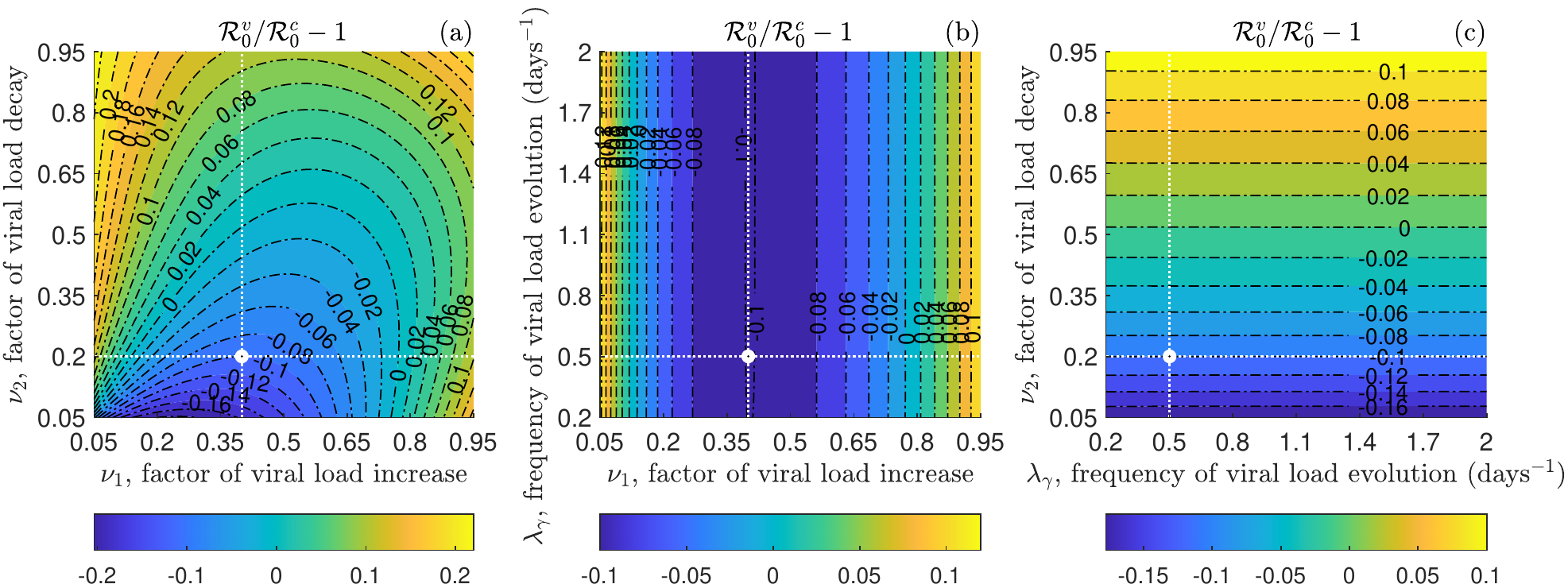}
 	\caption{Counterplots of the relative difference of the reproduction number of the model (\ref{macro_simplified}) in the scenario \textbf{S$^v$}, $\mathcal{R}_0^v$, w.r.t. the reproduction number in the scenario \textbf{S$^c$}, $\mathcal{R}_0^c$. Panel (a):  $\mathcal{R}_0^v/\mathcal{R}_0^c-1$ versus the factor of viral load increase, $\nu_1$, and the factor of viral load decay, $\nu_2$. Panel (b): $\mathcal{R}_0^v/\mathcal{R}_0^c-1$ versus the factor of viral load increase, $\nu_1$, and the frequency of viral load evolution, $\lambda_\gamma$. Panel (c): $\mathcal{R}_0^v/\mathcal{R}_0^c-1$ versus the frequency of viral load evolution, $\lambda_\gamma$, and the factor of viral load decay, $\nu_2$. The intersection between white dotted  lines indicates the corresponding baseline value. Other parameters values are given in Table \ref{TabPar}. }\label{fig2}
 \end{figure}
 \begin{figure}[!ht]\centering
 	\includegraphics[scale=0.95]{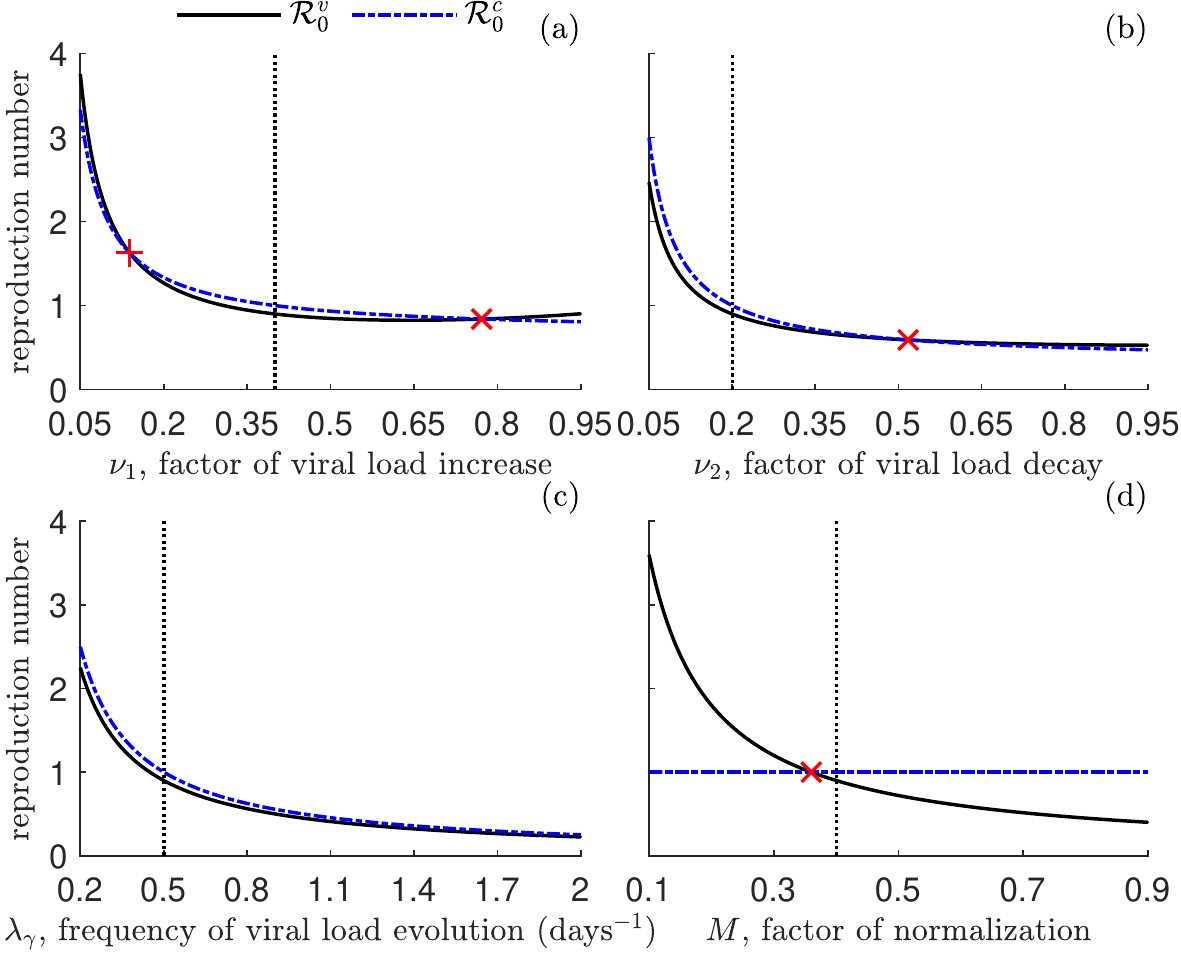}
 	\caption{Reproduction number of the model (\ref{macro_simplified}) in the scenario \textbf{S$^v$}, $\mathcal{R}_0^v$ (black solid lines), and reproduction number in the scenario \textbf{S$^c$}, $\mathcal{R}_0^c$ (blue dash--dotted lines). Panel (a): $\mathcal{R}_0^v$ and $\mathcal{R}_0^c$ as functions of the factor of viral load increase, $\nu_1$. Panel (b): $\mathcal{R}_0^v$ and $\mathcal{R}_0^c$ as functions of the factor of viral load decay, $\nu_2$. Panel (c): $\mathcal{R}_0^v$ and $\mathcal{R}_0^c$ as functions of the frequency of viral load evolution, $\lambda_\gamma$. Panel (d): $\mathcal{R}_0^v$ and $\mathcal{R}_0^c$ as functions of the factor of transmission normalization, $M$.  Black dotted  lines indicate the corresponding baseline values. Red x--marks indicate the intersection points between $\mathcal{R}_0^v$ and $\mathcal{R}_0^c$. Other parameters values are given in Table \ref{TabPar}. }\label{fig3}
 \end{figure}

 In Fig. \ref{fig2}, we display the counterplots of the relative difference $\mathcal{R}_0^v/\mathcal{R}_0^c-1$ as a function of the pairs of parameters $(\nu_1,\nu_2)$, $(\nu_1,\lambda_\gamma)$ and $(\lambda_\gamma,\nu_2)$ in the ranges given in (\ref{ranges}), by setting $M=0.4$. From Fig. \ref{fig2} we can evaluate the combined impact on $\mathcal{R}_0^v/\mathcal{R}_0^c-1$ of two model parameters among $\{\nu_1,\nu_2,\lambda_\gamma\}$ when the third one is set at the baseline value. We can note that, independently of $\lambda_\gamma$, it is $\mathcal{R}_0^v/\mathcal{R}_0^c-1>0$ (that is, $\mathcal{R}_0^v>\mathcal{R}_0^c$) if $\nu_1>0.8$ or $\nu_2>0.6$. Otherwise, if $\nu_1<0.8$ and $\nu_2<0.6$, then $\mathcal{R}_0^v/\mathcal{R}_0^c-1$ can be negative (namely, $\mathcal{R}_0^v<\mathcal{R}_0^c$). In other words, the  reproduction number of  the model (\ref{macro_simplified}) in the scenario \textbf{S$^v$} is greater than the  reproduction number in the scenario \textbf{S$^c$} when at least one between the factor of viral load increase and the factor of viral load decay is high, whilst $\mathcal{R}_0^v$ can be less than $\mathcal{R}_0^c$ when both $\nu_1$ and $\nu_2$ are medium--low. However, in any case the relative difference of $\mathcal{R}_0^v$ w.r.t. $\mathcal{R}_0^c$ is rather small: $\mathcal{R}_0^v$ is at most 20\% greater or smaller than $\mathcal{R}_0^c$. 
 
 The relative difference $\mathcal{R}_0^v/\mathcal{R}_0^c-1$ does not provide information about the exact values assumed by $\mathcal{R}_0^v$ and $\mathcal{R}_0^c$. In order to complete the investigations, in Fig. \ref{fig3} we provide the values of $\mathcal{R}_0^v$ and $\mathcal{R}_0^c$ as a function of one among the parameters $\{\nu_1,\nu_2,\lambda_\gamma\}$ in the ranges (\ref{ranges}) when the other ones are set at the baseline value (Figs. \ref{fig3}a--c), and as a function of the normalization factor $M\in[0.1,0.9]$ when the other parameters are set at the baseline value (Fig. \ref{fig3}d). Here,  we assume  that in the case of constant transmission rate the baseline value of the reproduction number is at the threshold 1, namely $$\mathcal{R}_0^c=1,$$ which yields  $\beta^c=0.10$ days$^{-1}$.
 
 From Fig. \ref{fig3} we note that, for values of $\nu_1$, $\nu_2$ and $\lambda_\gamma$ slightly smaller than the baseline value (see Figs. \ref{fig3}a--c, black dotted  lines), it is $\mathcal{R}_0^c$ above the  threshold 1 (blue dash--dotted  lines) and $\mathcal{R}_0^v$ below the  threshold 1 (black solid  lines). This suggests that, for given epidemiological conditions, the disease dynamics predicted by model (\ref{macro_simplified}) can be radically different depending on the modelling assumption about the transmission function $\nu_\beta(\cdot)$, namely if $p=0$ or $p=1$ in (\ref{macro_simplified}). Also, from Fig. \ref{fig3}d, we note that the reproduction number $\mathcal{R}_0^v$ varies from about 3.6 to 0.4 by varying $M$, by crossing the threshold 1 (that is also the value of $\mathcal{R}_0^c$) for $M\approx 0.36$.

\subsection{Impact of viral load on the disease dynamics}

In this subsection, we numerically explore the impact of different modelling assumptions about the  transmission rate on the disease dynamics predicted by model (\ref{macro_simplified}).
At this aim, we consider the following illustrative epidemiological setting.

 Initial data are set to the beginning of an epidemic, namely there is  a single infectious individual in a totally susceptible population:
\begin{equation}\label{CIvalues}
	\rho_{S,0}=(\bar N-1)/N_{tot},\,\,\rho_{I_1,0}=1/N_{tot},\,\,n_{I_1,0}=v_0\rho_{I_1,0},\,\,\rho_{I_2,0}=n_{I_2,0}=0.
\end{equation}
Here, like in the paper \cite{DMrLnTa22}, we assume that $$\mathcal{R}_0^c=4,$$  which yields $\beta^c=0.4$ days$^{-1}$.

We denote by $t_f$ the terminal time of our numerical simulations (i.e. time horizon). We want that the $t_f$  is a finite time with a reasonable epidemiological interpretation. To this end, inspired by the approach adopted  in the papers \cite{bolzoni2021optimal,hansen11JOMB}, we assume that $t_f$ coincides with  the end of the \textit{first} epidemic wave, namely  $t_f$ is the first time there is less than one infectious individual in the  population:
\begin{equation}
	t_f=\text{inf} \left\{t \in \mathbb{R}_+ \left| \rho_1(t)+\rho_2(t)<\dfrac{1}{N_{tot}} \right.\right\}.\label{tf}
\end{equation}
In other words, $t_f$ is the first time at which  $\rho_1+\rho_2$ drops to $1/N_{tot}$. Of course,  the presence of subsequent epidemic waves is not excluded, but for the sake of simplicity we focus here on just the first one.
\begin{figure}[t]\centering
	\includegraphics[scale=0.95]{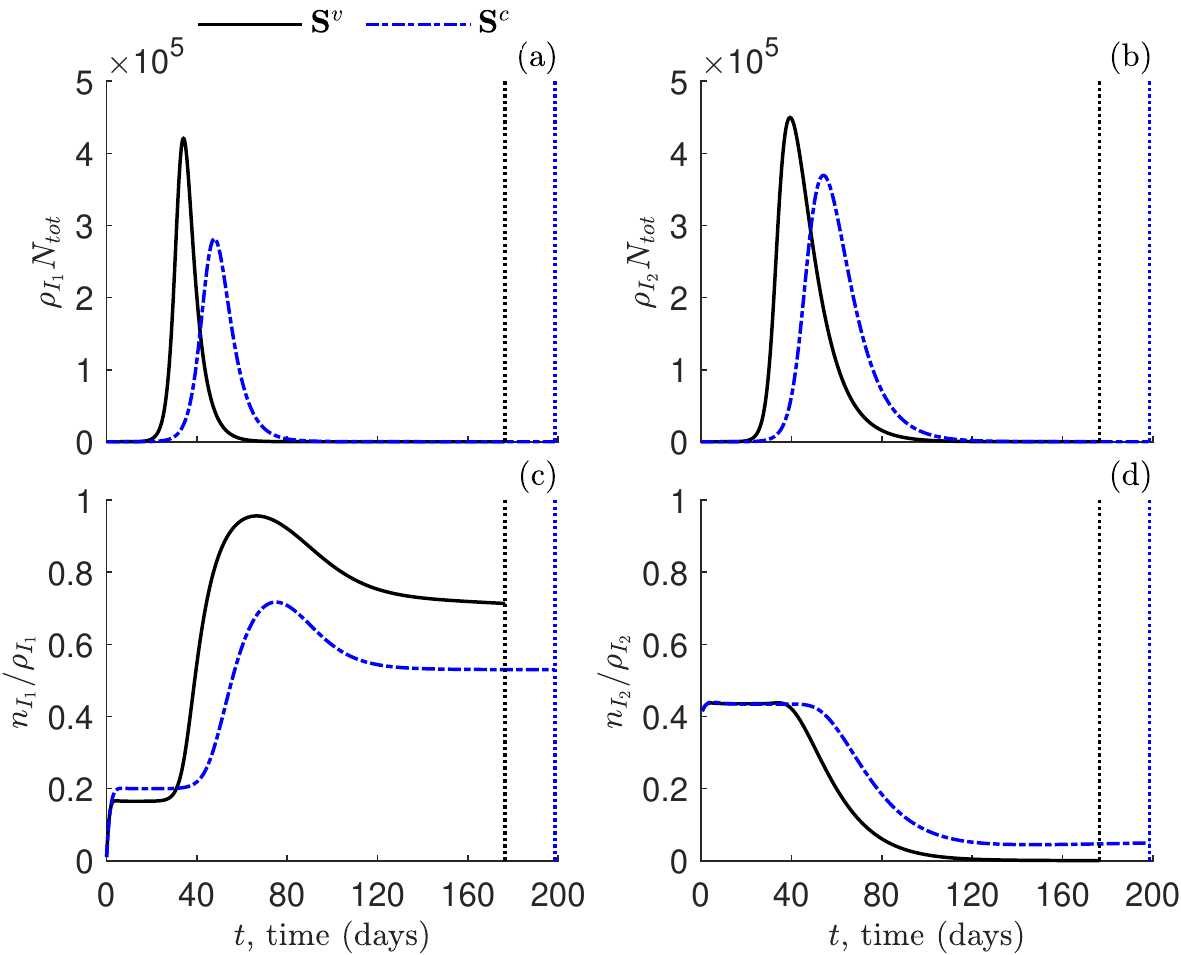}
	\caption{Numerical solutions as predicted by the model  (\ref{macro_simplified}) in scenarios \textbf{S$^v$} (black solid lines) and \textbf{S$^c$} (blue dash--dotted lines). Panel (a): compartment size of infectious individuals with increasing viral load, $I_1$. Panel (b): compartment size  of infectious individuals with decreasing viral load, $I_2$.
		Panel (c): mean viral load of infectious individuals with increasing viral load, $I_1$. Panel (d): mean viral load  of infectious individuals with decreasing viral load, $I_2$. Dotted  lines indicate the corresponding terminal time, as defined in (\ref{tf}). Initial conditions and other parameters values are given in (\ref{CIvalues}) and Table \ref{TabPar}, respectively.  }\label{fig4}
\end{figure}

In order to estimate the factor of transmission normalization $M$, we  consider the model (\ref{macro_simplified})  in the case \textbf{S$^c$}  and denote by $\rho_{I_1}^{c}(t)$, $\rho_{I_2}^{c}(t)$, $n_{I_1}^{c}(t)$, $n_{I_2}^{c}(t)$ the corresponding solutions for $\rho_{I_1}(t)$, $\rho_{I_2}(t)$, $n_{I_1}(t)$, $n_{I_2}(t)$, respectively. Then, $M$ is set to
\begin{equation*}
	M=\dfrac{1}{t_f}\int_0^{t_f}\dfrac{n_{I_1}^c(t)+n_{I_2}^c(t)}{\rho_{I_1}^c(t)+\rho_{I_2}^c(t)}dt,
\end{equation*}
that is the average value of the mean viral load of the infectious population over $[0,t_f]$. In such a way, we obtain
$$M=0.17,$$
yielding
$$\mathcal{R}_0^v=8.47.$$
It turns out that $\mathcal{R}_0^v$ is more than 110\% higher than $\mathcal{R}_0^c$, suggesting that the epidemic wave predicted in the scenario $\mathbf{S}^v$ could be much more devastating than in the scenario  $\mathbf{S}^c$. The other parameter values are given in Table \ref{TabPar}.

In Fig. \ref{fig4}, we display the numerical solutions of model (\ref{macro_simplified}) in the case of viral load--dependent transmission rate, $\mathbf{S}^v$ (black solid lines), and in the case of constant transmission rate, $\mathbf{S}^c$ (blue dash--dotted lines). Specifically, Figs. \ref{fig4}a and \ref{fig4}c [resp. \ref{fig4}b and \ref{fig4}d] report the temporal dynamics of the compartment size of the infectious individuals in $I_1$ [resp. in $I_2$] and the corresponding mean viral load.  Dotted  lines indicate the terminal time (\ref{tf}). 

In accordance with the values of  the reproduction numbers, from Figs. \ref{fig4}a and \ref{fig4}b  we observe that in the case $\mathbf{S}^v$  the  peak of infectious prevalence (i.e., $\max(\rho_{I_1}+\rho_{I_2})$) occurs earlier than in the case $\mathbf{S}^c$ (at day 36 vs at day 51) and it is also higher (about 770,000 vs 600,000 total infectious individuals). However, the epidemic wave ends earlier in the case $\mathbf{S}^v$ ($t_f=176$ days) than in the case $\mathbf{S}^c$ ($t_f=198$ days). Cumulatively, the total number of infections during the epidemic wave are:
\begin{equation*}
	\text{CI}=N_{tot}\int_{0}^{t_f}\beta\left(\dfrac{n_{I_1}(t)+n_{I_2}(t)}{\rho_{I_1}(t)+\rho_{I_2}(t)}\right)^p \rho_S(t)(\rho_{I_1}(t)+\rho_{I_2}(t))dt,
\end{equation*}
where CI stands for \textit{cumulative incidence}. We obtain that in the case  $\mathbf{S}^v$ ($\beta=\beta^v$, $p=1$) about all the initial population becomes infected, whilst in the case $\mathbf{S}^c$ ($\beta=\beta^c$, $p=0$) only 6,000 infections are avoided. From a mathematical point of view, this means that the area under the curve $\rho_{I_1}+\rho_{I_2}$  changes little by varying the modelling assumption about the disease transmission rate. In Table \ref{tab2}, we  collect some relevant epidemiological quantities in the scenarios $\mathbf{S}^v$ and $\mathbf{S}^c$.

{In Figs. \ref{fig4}c and \ref{fig4}d we report the temporal dynamics of the mean viral load of the infectious compartments.  As anticipated in Section \ref{sec:2}, we  highlight that the mean viral load is reliable when the number of particles is sufficiently high due to the law of large numbers (see \cite{DMrLnTa22} for a more detailed discussion).}
From Fig. \ref{fig4}c we note that the mean viral load of the compartment $I_1$ in the case $\mathbf{S}^v$ (black solid line) is -- for most of the time horizon --
higher than in the case $\mathbf{S}^c$ (blue dash--dotted  line). At variance, the mean viral load of the compartment $I_2$ is smaller in the case $\mathbf{S}^v$ w.r.t. the case  $\mathbf{S}^c$ (Fig.  \ref{fig4}d). This suggests that the model (\ref{macro_simplified}) under the assumption of constant disease transmission tends to underestimate the mean viral load of infectious individuals in the increasing phase and to overestimate the mean viral load of infectious individuals in the decreasing phase.
From Figs. \ref{fig4}c and \ref{fig4}d we also note that, {at the  end of the time horizon when the compartments $I_1$ and $I_2$ are almost empty,} the mean viral load remains approximately constant at a positive value, suggesting that the  viral load momentum $n_{I_1}$ [resp. $n_{I_2}$] and the density $\rho_{I_1}$ [resp. $\rho_{I_2}$] go to zero with the same \textit{speed}. 
\begin{table}[t]
	\centering\begin{tabular}{|@{}c|c|c|c|c@{}|}
		\hline
		Scenario &$\max(\rho_{I_1}+\rho_{I_2})N_{tot}$&arg$\max(\rho_{I_1}+\rho_{I_2})$& $t_f$& CI \\
		\hline 
		$\mathbf{S}^v$&  $7.70 \cdot 10^{5}$ & 35.6 days &176.4 days & $1.00\cdot 10^{6}$ \\
		$\mathbf{S}^c$& $5.96 \cdot 10^{5}$& 50.6 days& 198.5 days & $9.94\cdot 10^{5}$ \\
		\hline
	\end{tabular}\caption{Relevant quantities as predicted by the model (\ref{macro_simplified}) in the case of viral load--dependent transmission rate (scenario $\mathbf{S}^v$, first line) and in the case of constant transmission rate  (scenario $\mathbf{S}^c$, second line). First column: infectious prevalence peak, $\max(\rho_{I_1}+\rho_{I_2})N_{tot}$. Second column: time of infectious prevalence peak, arg$\max(\rho_{I_1}+\rho_{I_2})$. Third column: terminal time,  $t_f$. Fourth column: cumulative incidence at $t_f$,  CI. Initial conditions and other parameters values are given in (\ref{CIvalues}) and Table \ref{TabPar}, respectively.}\label{tab2}
\end{table}

\section{Conclusion}\label{sec:6}

{In this work, we  propose and analyse an SIR epidemic model with  viral load--dependent transmission. The  compartmental model is formally derived -- by the means of kinetic equations --  from a stochastic particle description of the individual course of the disease and the viral load progression. This approach allows the macroscopic model to inherit the features of the microscopic dynamics related to the heterogeneity of the viral load in the population  \cite{DMrLnTa22}.
	
	The main results are as follows:
	\begin{itemize}
		\item the particle stochastic model provides that, in the binary interaction between a susceptible and an infectious individual, the probability for the former to get infected depends on the viral load of the latter. In particular, the transmission function is a non--decreasing function of the viral load of the infectious individual. In the macroscopic model, the rate of disease transmission turns out to be a function of the \textit{mean viral load} of the infectious population;
		\item we analytically and numerically investigate the impact of different modelling assumptions about  the  disease transmission rate on the epidemic dynamics. In particular, we consider the case that the  transmission rate linearly depends on the viral load (scenario $\mathbf{S}^v$), which is compared to the classical case of constant  transmission rate (scenario $\mathbf{S}^c$);
		\item we determine explicitly the equilibria of the macroscopic model in the scenario $\mathbf{S}^v$ and study their stability in terms of the basic reproduction number ($\mathcal{R}_0$) of the model.  We prove that a transcritical forward bifurcation occurs at the disease--free equilibrium and  $\mathcal{R}_0=1$. The expression of the $\mathcal{R}_0$ appears to be much more complex than that in the scenario $\mathbf{S}^c$, by depending -- inter alia --  on the initial viral load of the infectious individuals;
		\item the numerical simulations unravel the relationship between the  reproduction numbers $\mathcal{R}_0$ in the scenarios $\mathbf{S}^v$ and $\mathbf{S}^c$ when the model parameters vary in appropriate ranges. We observe that the mutual position between the two $\mathcal{R}_0$'s is almost independent of the frequency of the viral load evolution, but it may be noticeably affected by the factors of viral load increase and decay. Interestingly, for given parameter values, it may be $\mathcal{R}_0<1$ in scenario $\mathbf{S}^v$ and  $\mathcal{R}_0>1$ in scenario  $\mathbf{S}^c$, suggesting that the disease dynamics can be radically different depending on the modelling assumption about the transmission rate;
		\item we simulate an epidemic wave by assuming $\mathcal{R}_0=4$ in the scenario $\mathbf{S}^c$ and estimating the $\mathcal{R}_0$ in the scenario $\mathbf{S}^v$ accordingly.  We obtain that in the case of viral load--dependent transmission, the epidemic wave is more severe, with a higher and earlier prevalence peak, than in the case of constant transmission. Also,  the model  in the scenario $\mathbf{S}^c$ tends to underestimate the mean viral load of infectious individuals whose viral load is  increasing  and to overestimate the mean viral load of infectious individuals whose viral load is decreasing.
	\end{itemize}
The role of viral load in the dynamics of infectious diseases has recently attracted the interest of mathematical epidemiologists. Our work makes a further step in this line of research, by 
  serving as a proof--of--principle verification of the impact of the individuals' viral load on the disease transmission rate and the consequent epidemic dynamics.
 
In the proposed framework, the description of the microscopic mechanisms and the heterogeneity of the viral load at the microscopic level allows one to derive a macroscopic model, which provides for a richer description of the  disease spreading in the host population w.r.t. classical epidemic models. Here we only consider the explicit  influence of the viral load on the transmission mechanism, but, in principle, other switches of individuals between compartments may depend on the viral load at the microscopic level, and on the mean viral load at the macroscopic level.  
Also, more complex situations could be addressed, for example by assuming  different initial viral loads of the  infectious individuals that may give rise to a different epidemic scenario.

We underline that our model does not address a specific infectious disease.
Of course, in presence of exhaustive data concerning the progression of the individuals' viral load, the modelling assumptions can be reformulated or adjusted according on the particular case.
However, even if our model is too simple 
to provide reliable solutions to
real--world epidemics from the quantitative point of view, our
theoretical findings can be used to inform more complex simulation
models developed for specific epidemiological scenarios where more realistic
descriptions of the biological and epidemiological processes are included.	

\medskip
\paragraph*{Acknowledgements} This work was supported by GNFM (Gruppo Nazionale per la Fisica Matematica) of INdAM (Istituto Nazionale di Alta Matematica) through a ``Progetto Giovani 2020'' grant. This work was also partially supported by the Italian Ministry of University and Research (MUR)  
through the ``Dipartimenti di Eccellenza'' Programme (2018-2022), Department of Mathematical Sciences ``G. L. Lagrange'', Politecnico di Torino (CUP: E11G18000350001). This work is also part of the activities of the PRIN 2020 project (No. 2020JLWP23) ``Integrated Mathematical Approaches to Socio--Epidemiological Dynamics''.

\paragraph*{Compliance with Ethical Standards and Competing interests}
The authors declare no conflicts of interest.

\bibliographystyle{abbrv}
\bibliography{DMrLnTa-viral_load_transmission}

\begin{thebibliography}{10}

\bibitem{Lockdown}
G.~Albi, L.~Pareschi, and M.~Zanella.
\newblock Modelling lockdown measures in epidemic outbreaks using selective
  socio-economic containment with uncertainty.
\newblock {\em Mathematical Biosciences and Engineering}, 18(6):7161--7190,
  2021.

\bibitem{volpertimmuno}
{Banerjee, Malay}, {Tokarev, Alexey}, and {Volpert, Vitaly}.
\newblock Immuno--epidemiological model of two--stage epidemic growth.
\newblock {\em Mathematica Modelling of Natural Phenomena}, 15:27, 2020.

\bibitem{Bernardi}
E.~Bernardi, L.~Pareschi, G.~Toscani, and M.~Zanella.
\newblock Effects of vaccination efficacy on wealth distribution in kinetic
  epidemic models.
\newblock {\em Entropy}, 24(2), 2022.

\bibitem{Bertaglia2}
G.~Bertaglia, W.~Boscheri, G.~Dimarco, and L.~Pareschi.
\newblock Spatial spread of covid-19 outbreak in italy using multiscale kinetic
  transport equations with uncertainty.
\newblock {\em Mathematical Biosciences and Engineering}, 18:7028--7059, 08
  2021.

\bibitem{BertagliaPareschi}
G.~Bertaglia and L.~Pareschi.
\newblock Hyperbolic models for the spread of epidemics on networks: kinetic
  description and numerical methods.
\newblock {\em ESAIM: Mathematical Modelling and Numerical Analysis},
  55(2):381--407, 2021.

\bibitem{bolzoni2021optimal}
L.~Bolzoni, R.~Della~Marca, and M.~Groppi.
\newblock On the optimal control of {SIR model with E}rlang--distributed
  infectious period: isolation strategies.
\newblock {\em Journal of Mathematical Biology}, 83(4):1--21, 2021.

\bibitem{cevik2020virology}
M.~Cevik, K.~Kuppalli, J.~Kindrachuk, and M.~Peiris.
\newblock Virology, transmission, and pathogenesis of {SARS--CoV--2}.
\newblock {\em British Medical Journal}, 371:m3862, 2020.

\bibitem{DMrLnTa22}
R.~{Della Marca}, N.~Loy, and A.~Tosin.
\newblock An {SIR}--like kinetic model tracking individuals' viral load.
\newblock {\em Networks and Heterogeneous Media}, 17(3):467--494, 2022.

\bibitem{DimarcoPareschi2020}
G.~Dimarco, L.~Pareschi, G.~Toscani, and M.~Zanella.
\newblock Wealth distribution under the spread of infectious diseases.
\newblock {\em Physical Review E}, 102:022303, 2020.

\bibitem{DMPerJOMB}
G.~Dimarco, B.~Perthame, G.~Toscani, and M.~Zanella.
\newblock Kinetic models for epidemic dynamics with social heterogeneity.
\newblock {\em Journal of Mathematical Biology}, 83(1):1--32, 2021.

\bibitem{Dimarco2021}
G.~Dimarco, G.~Toscani, and M.~Zanella.
\newblock Optimal control of epidemic spreading in the presence of social
  heterogeneity.
\newblock {\em Philosophical Transactions of the Royal Society A: Mathematical,
  Physical and Engineering Sciences}, 380(2224):20210160, 2022.

\bibitem{dushoff1998}
J.~Dushoff, W.~Huang, and C.~Castillo-Chavez.
\newblock Backwards bifurcations and catastrophe in simple models of fatal
  diseases.
\newblock {\em Journal of Mathematical Biology}, 36(3):227--248, 1998.

\bibitem{euros2}
{European Commission - eurostat}.
\newblock {Deaths and crude death rate}.
\newblock
  \url{https://ec.europa.eu/eurostat/databrowser/view/tps00029/default/table?lang=en},
  2022.
\newblock (Accessed on April 2022).

\bibitem{euros1}
{European Commission - eurostat}.
\newblock {Live births and crude birth rate}.
\newblock
  \url{https://ec.europa.eu/eurostat/databrowser/view/TPS00204/bookmark/table?lang=en&bookmarkId=5b6e67ac-186d-4081-aa98-1453b77ec260},
  2022.
\newblock (Accessed on April 2022).

\bibitem{guckenheimer1983}
J.~Guckenheimer and P.~Holmes.
\newblock {\em Nonlinear Oscillations, Dynamical Systems, and Bifurcations of
  Vector Fields}.
\newblock Springer, Berlin, 1983.

\bibitem{hansen11JOMB}
E.~Hansen and T.~Day.
\newblock Optimal control of epidemics with limited resources.
\newblock {\em Journal of Mathematical Biology}, 62:423--451, 2011.

\bibitem{He2020}
X.~He, E.~H.~Y. Lau, P.~Wu, X.~Deng, W.~Jian, X.~Hao, Y.~C. Lau, J.~Y. Wong,
  Y.~Guan, X.~Tan, X.~Mo, Y.~Chen, B.~Liao, W.~Chen, F.~Hu, Q.~Zhang, M.~Zhong,
  Y.~Wu, L.~Zhao, F.~Zhang, B.~J. Cowling, F.~Li, and G.~M. Leung.
\newblock Temporal dynamics in viral shedding and transmissibility of
  {COVID--19}.
\newblock {\em Nature Medicine}, 26:672--675, 2020.

\bibitem{hethcote}
H.~W. Hethcote.
\newblock The mathematics of infectious diseases.
\newblock {\em SIAM Review}, 42(4):599--653, 2000.

\bibitem{loy2021KRM}
N.~Loy and A.~Tosin.
\newblock {B}oltzmann--type equations for multi--agent systems with label
  switching.
\newblock {\em Kinetic and Related Models}, 14(5):867--894, 2021.

\bibitem{loy2021MBE}
N.~Loy and A.~Tosin.
\newblock A viral load--based model for epidemic spread on spatial networks.
\newblock {\em Mathematical Biosciences and Engineering}, 18(5):5635--5663,
  2021.

\bibitem{ma}
MATLAB.
\newblock {Matlab release 2022a. The MathWorks, Inc., Natick, MA}, 2022.

\bibitem{MedZan2021}
A.~Medaglia and M.~Zanella.
\newblock Kinetic and macroscopic epidemic models in presence of multiple
  heterogeneous populations.
\newblock Preprint: arXiv:2111.05563, 2021.

\bibitem{MIKSZEWSKI2021}
A.~Mikszewski, L.~Stabile, G.~Buonanno, and L.~Morawska.
\newblock The airborne contagiousness of respiratory viruses: {A} comparative
  analysis and implications for mitigation.
\newblock {\em Geoscience Frontiers}, page 101285, 2021.

\bibitem{pareschi2013BOOK}
L.~Pareschi and G.~Toscani.
\newblock {\em Interacting {M}ultiagent {S}ystems: {K}inetic {E}quations and
  {M}onte {C}arlo {M}ethods}.
\newblock Oxford University Press, Oxford, 2013.

\bibitem{quinn}
T.~C. Quinn, M.~J. Wawer, N.~Sewankambo, D.~Serwadda, C.~Li, F.~Wabwire-Mangen,
  M.~O. Meehan, T.~Lutalo, and R.~H. Gray.
\newblock Viral load and heterosexual transmission of {Human Immunodeficiency
  Virus Type 1}.
\newblock {\em New England Journal of Medicine}, 342(13):921--929, 2000.

\bibitem{vandendriessche2002}
P.~Van~den Driessche and J.~Watmough.
\newblock Reproduction numbers and sub--threshold endemic equilibria for
  compartmental models of disease transmission.
\newblock {\em Mathematical Biosciences}, 180(1):29--48, 2002.

\bibitem{wilson2008}
D.~P. Wilson, M.~G. Law, A.~E. Grulich, D.~A. Cooper, and J.~M. Kaldor.
\newblock Relation between {HIV} viral load and infectiousness: a model--based
  analysis.
\newblock {\em The Lancet}, 372(9635):314--320, 2008.

\bibitem{ZanellaBardelliAzzi}
M.~Zanella, C.~Bardelli, M.~Azzi, S.~Deandrea, P.~Perotti, S.~Silva, E.~Cadum,
  S.~Figini, and G.~Toscani.
\newblock Social contacts, epidemic spreading and health system.{ Mathematical
  modeling and applications to COVID--19 infection}.
\newblock {\em Mathematical Biosciences and Engineering}, 18(4):3384--3403,
  2021.

\end{thebibliography}
\end{document}